\algnewcommand{\LineComment}[1]{\State \(\triangleright\) #1}
\pgfplotsset{width=7cm,compat=1.3}
\colorlet{grey}{gray}
\colorlet{ColorD1}{grey!10}
\colorlet{ColorD2}{grey!20}
\colorlet{ColorD3}{grey!30}
\colorlet{ColorD4}{grey!40}
\colorlet{ColorD5}{grey!50}
\colorlet{ColorD6}{grey!60}
\colorlet{ColorD7}{grey!70}
\colorlet{ColorD8}{grey!80!}
\tikzstyle{vertex}=[circle,draw=black, fill=black!25,minimum size=20pt,inner sep=0pt, text centered, text width=1.5em]
\tikzstyle{edge} = [draw,thick,-]
\tikzset{
  treenode/.style = {align=center, inner sep=2pt, text centered,
    font=\sffamily},
  arn_r/.style = {treenode, circle, black, font=\sffamily\bfseries, draw=black,
    text width=1.5em},
    arn_t/.style = {treenode, circle, black, thick, double, font=\sffamily\bfseries, draw=black,
    text width=1.5em},
  every edge/.append style={anchor=south,auto=falseanchor=south,auto=false,font=3.5 em},
}
\newcommand{\hook}[2]{\mathcal{H}_{#1}^{#2}}
\newtheorem{theorem}{Theorem}
\newtheorem{observation}[theorem]{Observation}
\newtheorem{proposition}[theorem]{Proposition}
\newtheorem{lemma}[theorem]{Lemma}
\newtheorem{example}{Example}
\theoremstyle{definition}
\newtheorem{mydef}{Definition}[section]
\newtheorem{remark}{Remark}
\def\dd{\mathinner{.\,.}}
\newcommand{\Oh}{\mathcal{O}}
\newcommand{\LUF}{\mathsf{LUF}}
\newcommand{\HOOK}{\mathsf{HOOK}}
\newcommand{\LSFr}{\mathsf{LSF}_{r}}
\newcommand{\LSFl}{\mathsf{LSF}_\ell}
\newcommand{\Stack}{\mathcal{S}}
\newcommand{\Twin}{\mathcal{T}}
 \newcommand{\defproblem}[3]{
  \vspace{3mm}
\noindent\fbox{
  \begin{minipage}{0.96\textwidth}
  #1\\
  {\bf{Input:}} #2  \\
  {\bf{Output:}} #3
  \end{minipage}
  }
  \vspace{3mm}
}
\title{Longest Unbordered Factor in Quasilinear Time}
\author[1]{Tomasz Kociumaka\thanks{kociumaka@mimuw.edu.pl}}
\author[2]{Ritu Kundu\thanks{ritu.kundu@kcl.ac.uk}}
\author[2]{Manal Mohamed\thanks{manal.mohamed@kcl.ac.uk}}
\author[2]{Solon P. Pissis\thanks{solon.pissis@kcl.ac.uk}}
\affil[1]{Institute of Informatics, University of Warsaw, Warsaw, Poland}
\affil[2]{Department of Informatics, King's College London, London, UK }
\begin{document}

\maketitle

\begin{abstract}
A {\em border} $u$ of a word $w$ is a proper factor of $w$ occurring both as a prefix and as a suffix. The {\em maximal unbordered factor} of $w$ is the longest factor of $w$ which does not have a border. Here an $\Oh(n\log n)$-time with high probability (or $\Oh(n \log n \log^2 \log n)$-time deterministic) algorithm to compute the \textit{Longest Unbordered Factor Array} of $w$ for general alphabets is presented, where $n$ is the length of $w$. This array specifies the length of the maximal unbordered factor starting at each position of $w$. This is a major improvement on the running time of the currently best worst-case algorithm working in $\mathcal{O}(n^{1.5})$ time for integer alphabets [Gawrychowski et al., 2015]. 

\end{abstract}

\section{Introduction}\label{sect:intro}

There are two central properties characterising repetitions in a word --\textit{period} and \textit{border}-- which play direct or indirect roles in several diverse applications ranging over pattern matching, text compression, assembly of genomic sequences and so on (see \cite{crochemore2007algorithms, crochemore2002jewels}). A period of a non-empty word $w$ of length $n$ is an integer $p$ such that $1 \leq p \leq n$, if $w[i] = w[i + p]$, for all $1 \leq i \leq n-p$. For instance, $3$, $6$, $7$, and $8$ are periods of the word $\texttt{aabaabaa}$. On the other hand, a border $u$ of $w$ is a (possibly empty) proper factor of $w$ occurring both as a prefix and as a suffix of $w$. For example, $\varepsilon$, $\texttt{a}$, $\texttt{aa}$, and $\texttt{aabaa}$ are the borders of $w=\texttt{aabaabaa}$. 

In fact, the notions of border and period are dual: the length of  each border of $w$ is equal to the length of $w$ minus the length of some period of $w$. For example,  \texttt{aa} is a border of the word \texttt{aabaabaa}; it corresponds to period $6=|\texttt{aabaabaa}|-|\texttt{aa}|$. Consequently, the basic data structure of periodicity on words is the \textit{border array} which stores the length of the longest border for each prefix of $w$. The computation of the border array of $w$ was the fundamental concept behind the first linear-time pattern matching algorithm -- given a word $w$ (pattern), find all its occurrences in a longer word $y$ (text). The border array of $w$ is better known as the ``failure function'' introduced in~\cite{morris1970linear} (see also~\cite{aho1987design}).  It is well-known that the border array of $w$ can be computed in $\mathcal{O}(n)$ time, where $n$ is the length of $w$, by a variant of the Knuth-Morris-Pratt algorithm~\cite{morris1970linear}.

Another notable aspect of the inter-dependency of these dual notions is the relationship between the length of the maximal unbordered factor of $w$ and the periodicity of $w$.  A~maximal unbordered factor is the longest factor of $w$ which does not have a border; its length is usually represented by $\mu(w)$, e.g. the maximal unbordered factor is  \texttt{aabab} and $\mu(w) = 5$ for the word $w=\texttt{baabab}$. This dependency has been a subject of interest in the literature for a long time, starting from the 1979 paper of Ehrenfeucht and Silberger~\cite{Ehrenfeucht} in which they raised the question -- at what length of $w$, $\mu(w)$ is maximal (i.e., equal to the minimal period of the word as it is well-known that it cannot be longer than that). This line of questioning, after being explored for more than three decades, culminated in 2012 with the work by Holub and Nowotka \cite{HOLUB2012668} where  an asymptotically optimal upper bound ($\mu(w)  \leq \frac{3}{7}n$) was presented; the historic overview of the related research can  be found in \cite{HOLUB2012668}.

Somewhat surprisingly, the symmetric computational problem---given a word $w$, compute the longest factor of $w$ that does not have a border---had not been studied until very recently. In 2015, Kucherov et al.~\cite{119f2fc2916f4ddd83626030d6641b36} considered this arguably natural problem and presented the first sub-quadratic-time solution.
A na\"{i}ve way to solve this problem
is to compute the border array starting at each position of $w$ and locating the rightmost zero, which results in an algorithm with $\mathcal{O}(n^2)$ worst-case running time.
On the other hand, the computation of the maximal unbordered factor can be done in linear time for the cases when $\mu(w)$ or its minimal period is small (i.e., at most half the length of $w$) using the linear-time computation of unbordered conjugates~\cite{DUVAL201477}. However, as has been illustrated in \cite{119f2fc2916f4ddd83626030d6641b36} and \cite{d67afd93990c4d2ea6906694e724fbb1}, most of the words do not fall in this category owing to the fact that they have large $\mu(w)$ and consequently large minimal period.  
In~\cite{119f2fc2916f4ddd83626030d6641b36}, an adaptation of the basic algorithm has been provided with average-case running time $\mathcal{O}(n^2/\sigma^4)$, where $\sigma$ is the alphabet's size; it has also been shown to work better, both in practice and asymptotically, than another straightforward approach that employs data structures from \cite{DBLP:conf/soda/KociumakaRRW15, 10.1007/978-3-642-34109-0_30} to query all relevant factors.

The currently fastest worst-case algorithm to compute the maximal unbordered factor of a given word takes $\mathcal{O}(n^{1.5})$ time; it was presented by Gawrychowski et al.~\cite{Gawrychowski2015} and it works for integer alphabets (alphabets of polynomial size in $n$). This algorithm works by categorising bordered factors into {\em short} borders and {\em long} borders depending on a threshold, and exploiting the fact that, for each position, the short borders are bounded by the threshold and the long borders are small in number. The resulting algorithm runs in $\mathcal{O}(n\log n)$ time on average. More recently, an $\mathcal{O}(n)$-time average-case algorithm was presented using a refined bound on the expected length of the maximal unbordered factor~\cite{d67afd93990c4d2ea6906694e724fbb1}.  

 \textbf{Our Contribution.} 
In this paper, we show how to efficiently answer the Longest Unbordered Factor question using combinatorial insight.
Specifically, we present an algorithm that computes the \textit{Longest Unbordered Factor Array} in $\Oh(n\log n)$ time with high probability. The algorithm can also be implemented deterministically in $\Oh(n \log n \log^2 \log n)$ time. This array specifies the length of the maximal unbordered factor  at each position in $w$. We thus improve on the running time of the currently fastest algorithm, which reports only the maximal unbordered factor of $w$ and works only for integer alphabets, taking $\mathcal{O}(n^{1.5})$ time. 
 
\textbf{Structure of the Paper.} 
In Section \ref{sect:max}, we present the preliminaries, some useful properties of unbordered words, the algorithmic toolbox, and a formal definition of the problem. We lay down the combinatorial foundation of the algorithm in Section \ref{sec:comb} and expound the algorithm in Section \ref{sec:algo}; its analysis is explicated in Section \ref{sec:analysis}. We conclude this paper with a final remark in Section \ref{sec:conclusion}.

\section{Background}\label{sect:max}

\noindent \textbf{Definitions and  Notation.}
We consider a finite \textit{alphabet} $\Sigma$ of \textit{letters}. Let $\Sigma^*$ be the set of all finite words over $\Sigma$.  The \textit{empty} word is denoted by $\varepsilon$. The \textit{length} of a word $w$ is denoted by $|w|$. For a word $w = w[1] w[2] \dd w[n]$,  $w[i\dd j]$ denotes the \textit{factor} $w[i]w[i+1]\dd w[j]$, where   $1 \leq i \leq j\leq n$. The {\em concatenation} of two words $u$ and $v$ is the word composed of the letters of $u$ followed by the letters of $v$. It is denoted by $uv$ or also by $u \cdot v$ to show the decomposition of the resulting word. Suppose $w=uv$, then $u$ is  a \textit{prefix}  and $v$ is  a \textit{suffix} of $w$; if $u\neq w$ then $u$ is a \textit{proper prefix} of $w$; similarly, if $v \neq w$ then $v$ is a \textit{proper suffix} of $w$. Throughout the paper we consider a non-empty word $w$ of length $n$ over a \textit{general alphabet} $\Sigma$; in this case, we replace each letter by its rank such that the resulting word consists of integers in the range $\{1,\ldots,n\}$. This can be done in $\mathcal{O}(n \log n)$ time after sorting the letters of $\Sigma$.


An integer $1\leq p \leq n$ is a \textit{period} of $w$ if and only if $w[i] = w[i+p]$ for all $1 \leq i \leq n-p$. The smallest period of $w$ is called the \textit{minimum period} (or \textit{the period}) of $w$, denoted by $\lambda(w)$. A  word $u$  ($u\neq w$) is  a \textit{border} of $w$, if $w = uv = v'u$ for some non-empty words $v$ and $v'$; note that $u$ is both a proper prefix and a suffix of $w$.   It should be clear that if $w$ has a border of length $|w|-p$ then it has a period $p$. Thus, the minimum period of $w$ corresponds to the length of the \textit{longest border} (or \textit{the border}) of $w$. Observe that the empty word $\varepsilon$ is a border of any word $w$.  If $u$ is the \textit{shortest border} then $u$ is the shortest {\it non-empty} border of $w$.

The word $w$ is called \textit{bordered} if it has a non-empty border, otherwise it is  \textit{unbordered}. Equivalently, the minimum period $p= |w|$ for an unbordered word $w$. Note that every bordered word $w$ has a shortest border $u$ such that $w = uvu$, where $u$ is unbordered. By $\mu(w)$ we denote the maximum length among all the unbordered factors of $w$.
\medskip

\noindent \textbf{Useful Properties of Unbordered Words.}
Recall that a word $u$ is a border of a word $w$ if and only if $u$ is both a  proper prefix and a  suffix of $w$. A border of a border of $w$ is also a border of $w$. A word $w$ is unbordered if and only if it has no non-empty border; equivalently $\varepsilon$ is the only border of $w$. The following properties related to unbordered words form the basis of our algorithm and were presented and proved in~\cite{Duval198231}.

\begin{proposition}[\cite{Duval198231}]
Let $w$ be a bordered word and $u$ be the shortest non-empty border of $w$. The following  propositions hold:
\begin{enumerate}
\item $u$ is an unbordered word;
\item $u$ is the unique unbordered prefix and suffix of $w$;
\item  $w$ has the form $w= uvu$.
\end{enumerate}
\end{proposition}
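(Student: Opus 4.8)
The plan is to prove the three statements in a logically convenient order: first part 1, then part 3, and finally part 2, since parts 1 and 3 are what make part 2 go through. Throughout, let $w$ be bordered and let $u$ be its shortest non-empty border, so that $w = u\alpha = \beta u$ for some non-empty words $\alpha,\beta$ with $|\alpha| = |\beta| = |w| - |u| =: p$.

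For part 1, I would argue by contradiction: suppose $u$ has a non-empty border $t$. Since "a border of a border is a border" (noted in the text just above the statement), $t$ would be a non-empty border of $w$ that is strictly shorter than $u$, contradicting the minimality of $u$. Hence $u$ is unbordered.

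For part 3, I want to show $\alpha$ ends with $u$, i.e.\ that $w = uvu$ for some (possibly empty) $v$. The key is a length comparison. If $|u| \le p$, then $u$ occurs as a prefix of $w$ within the first $p$ positions, and since $p$ is a period of $w$ (because $u$ is a border of length $|w|-p$), periodicity lets me "shift" that prefix occurrence of $u$ forward by $p$ to obtain an occurrence of $u$ ending at position $|w|$; this gives $w = uvu$. If instead $|u| > p$, then $u$ itself has period $p < |u|$, so $u$ has a border of length $|u| - p \ge 1$, contradicting part 1; hence this case cannot occur. The main obstacle is this case analysis on $|u|$ versus $p$ together with the careful bookkeeping of the periodicity shift — one must check that the shifted copy of $u$ actually lands flush against the right end of $w$ and that it does not overlap itself in a way that reintroduces a shorter border. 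This is precisely where minimality of $u$ is used again to rule out overlap.

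Finally, for part 2, I must show $u$ is the \emph{unique} unbordered prefix of $w$ (the suffix case is symmetric, or follows by reversal). Suppose $u'$ is any unbordered prefix of $w$. If $|u'| < |u|$, then $u'$ is a proper prefix of $u$; but since $u$ is itself (by part 3) a prefix of $w$ ending with a copy of $u$, a short enough prefix $u'$ of $u$ would be a border of $u$ — contradicting part 1 unless $u' = \varepsilon$, which is excluded since we want unbordered \emph{and} we are comparing with the shortest \emph{non-empty} border. If $|u'| > |u|$, then $u'$, being a prefix of $w$ that contains the prefix-copy of $u$ and (by a periodicity argument as in part 3) also a suffix-copy of $u$, has $u$ as a border — contradicting that $u'$ is unbordered. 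Hence $|u'| = |u|$, and since both are prefixes of $w$, $u' = u$. The delicate point here is handling the boundary where the two copies of $u$ inside $u'$ might overlap (when $|u'| < 2|u|$); in that overlap regime one invokes the Fine–Wilf–type reasoning (or directly the unborderedness of $u$ from part 1) to force the contradiction, which is the step I expect to require the most care.
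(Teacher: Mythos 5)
The paper does not actually prove this proposition; it cites it to Duval's 1982 paper. So there is no in-paper argument to compare yours against, but your proposal has a genuine error worth flagging. Parts~1 and~3 are essentially sound (in part~3 the ``shift'' step is unnecessary since $u$ being a border already makes it a suffix of $w$; the only content is the case split on $|u|$ versus $p=|w|-|u|$, and you handle that correctly).

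Part~2 is where the proposal breaks down. You read ``$u$ is the unique unbordered prefix and suffix of $w$'' as ``$u$ is the unique unbordered \emph{prefix} of $w$, and symmetrically the unique unbordered \emph{suffix}.'' That reading makes the statement false: take $w=\texttt{aabaa}$, so $u=\texttt{a}$; then $\texttt{aab}$ and $\texttt{aabaa}$ are also unbordered prefixes of $w$, and in particular $u=\texttt{a}$ is \emph{not} a border of $\texttt{aab}$, which falsifies your claimed step that any unbordered prefix $u'$ with $|u'|>|u|$ ``has $u$ as a border.'' The intended reading (consistent with Duval) is that $u$ is the unique non-empty unbordered word that is \emph{simultaneously} a prefix and a suffix of $w$, i.e.\ the unique unbordered non-empty border. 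Under that reading the proof is shorter and different from what you wrote: if $u'$ is any non-empty unbordered border of $w$, then $|u'|\ge|u|$ by minimality of $u$; and if $|u'|>|u|$, the standard fact that a shorter border of $w$ is a border of a longer one gives that $u$ is a non-empty border of $u'$, contradicting that $u'$ is unbordered; hence $|u'|=|u|$ and, both being prefixes, $u'=u$. Your $|u'|<|u|$ subcase is vacuous (it directly contradicts minimality of $u$) and needs none of the periodicity or overlap analysis you sketch. So the interpretation and the key lemma are both off, and the $|u'|>|u|$ step as written is simply false.
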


\begin{proposition}[\cite{Duval198231}]
\label{decompo} 
For any word $w$, there exists a unique sequence $(u_1,\cdots, u_k)$ of unbordered prefixes of $w$ such that $w= u_k\cdots u_1$. Furthermore, the following properties hold:   
\begin{enumerate}
\item $u_1$ is the shortest border of $w$;
\item $u_k$ is the longest unbordered prefix of $w$;
\item for all $i$, $1 \leq i\leq k$, $u_i$ is an unbordered prefix of $u_k$.
\end{enumerate}
\end{proposition}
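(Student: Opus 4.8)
The plan is to construct the sequence $(u_1,\dots,u_k)$ greedily from the right end of $w$ and to prove uniqueness and the three listed properties along the way. Concretely, set $w^{(0)} = w$, and as long as $w^{(j)}$ is non-empty, let $u_{j+1}$ be the \emph{longest unbordered prefix} of $w^{(j)}$ and write $w^{(j)} = w^{(j+1)} u_{j+1}$; note $u_{j+1}$ is a suffix of $w^{(j)}$ by construction of the decomposition, and a prefix because it is a prefix of $w^{(j)}$. This process strictly shrinks the length, so it terminates after some number $k$ of steps with $w = u_k \cdots u_1$, and each $u_i$ is by definition an unbordered prefix of the corresponding $w^{(i-1)}$. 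The first thing to check is that each $u_i$ (not just $u_k$) is an unbordered prefix of $w$ itself: since $w^{(i-1)}$ is a suffix of $w$ and a prefix of $w$ is determined by its length, the longest unbordered prefix of $w^{(i-1)}$ is a prefix of $w$; and it is unbordered by construction, which gives property (3) in the stronger form ``$u_i$ is an unbordered prefix of $w$'' — I will then upgrade this to ``$u_i$ is an unbordered prefix of $u_k$'' by observing $|u_i| \le |u_k|$ (see below) and that a shorter prefix of $w$ is a prefix of $u_k$.

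For existence of the key inequalities and property (1), I would argue as follows. Property (2), $u_k$ is the longest unbordered prefix of $w$, is immediate since $u_k$ is the longest unbordered prefix of $w^{(k-1)} = w$. For property (1), I claim $u_1$ is the shortest border of $w$: first, $u_1$ is a suffix of $w$ and a prefix of $w$ (being a prefix of $w^{(k-1)} \cdots$, hence of $w$) and is proper (since $k \ge 2$ unless $w$ is already unbordered, the degenerate case), so $u_1$ is a border; and by Proposition~1, part 2, the \emph{shortest} border of a bordered word is its unique unbordered prefix that is also a suffix — since $u_1$ is an unbordered prefix of $w$ that is also a suffix, it must coincide with that unique shortest border. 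For the inequality $|u_i| \le |u_k|$ used above: each $u_i$ is an unbordered prefix of $w$, and $u_k$ is the \emph{longest} such, so $|u_i| \le |u_k|$ trivially.

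Uniqueness is the part I expect to require the most care. Suppose $w = v_m \cdots v_1$ is any decomposition into unbordered prefixes of $w$. I would prove by a peeling argument that $v_1 = u_1$, then recurse. The point is that $v_1$ is an unbordered suffix of $w$; by Proposition~1, part 2, a bordered word has a \emph{unique} unbordered suffix, namely its shortest border, so $v_1$ is forced to equal that shortest border, which is $u_1$. (The subtlety is that Proposition~1 as stated speaks of ``the unique unbordered prefix and suffix''; I would read part 2 as asserting uniqueness of the unbordered suffix, which is what the peeling needs; if a separate lemma is required I would state it: every non-empty word has exactly one unbordered suffix.) Having matched the rightmost factor, strip it off: $v_m \cdots v_2$ is a prefix of $w$ of length $n - |u_1|$, hence equals $w^{(1)}$, and each $v_i$ ($i \ge 2$), being an unbordered prefix of $w$ of length at most that of $w^{(1)}$, is an unbordered prefix of $w^{(1)}$; so $(v_m,\dots,v_2)$ is a decomposition of $w^{(1)}$ of the same type, and induction on $|w|$ finishes it. The main obstacle, then, is the ``unbordered suffixes are unique'' fact — everything else is bookkeeping about prefixes being determined by their length — and I would either cite it from Proposition~1(2) directly or prove it from scratch using the standard observation that two distinct unbordered suffixes, one a suffix of the other, would force the shorter to be a border of the longer, contradicting unborderedness.
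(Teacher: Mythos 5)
The paper does not prove Proposition~\ref{decompo}; it cites it from Duval~\cite{Duval198231}. So your attempt has to stand on its own, and unfortunately the existence construction is broken. You define $u_{j+1}$ to be the \emph{longest unbordered prefix} of $w^{(j)}$ and then write $w^{(j)} = w^{(j+1)} u_{j+1}$, but this factorisation only makes sense if $u_{j+1}$ is also a \emph{suffix} of $w^{(j)}$, which the longest unbordered prefix generally is not: for $w^{(j)}=\texttt{aba}$ the longest unbordered prefix is $\texttt{ab}$, not a suffix of $\texttt{aba}$. The ``by construction of the decomposition'' justification is circular. The correct object to peel from the right is the \emph{shortest non-empty border} of $w^{(j)}$, which is the unique factor that is simultaneously an unbordered prefix and an unbordered suffix (Proposition~1); the longest unbordered prefix is what you end up with as the final, leftmost factor $u_k$, not what you strip first. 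The indexing also does not cohere: having set $w^{(0)}=w$ and peeled $u_{j+1}$ off $w^{(j)}$, you later assert $w^{(k-1)}=w$, which only holds when $k=1$, and you claim $w^{(i-1)}$ is a suffix of $w$ when stripping from the right makes it a prefix. If you instead tried stripping the longest unbordered prefix from the \emph{left} (so $w^{(j)} = u_{j+1}w^{(j+1)}$), the factors $u_i$ for $i\ge 2$ would be prefixes of proper suffixes of $w$, and these need not be prefixes of $w$: for $w=\texttt{abaaba}$, after removing the longest unbordered prefix $\texttt{ab}$ one is left with $\texttt{aaba}$, whose longest unbordered prefix is $\texttt{aab}$, which is not a prefix of $w$.

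The uniqueness idea (peel the rightmost factor, recurse) is sound, but the lemma you lean on — ``every non-empty word has exactly one unbordered suffix'' — is false: $\texttt{aba}$ has two unbordered suffixes, $\texttt{a}$ and $\texttt{ba}$, and your closing justification (``two distinct unbordered suffixes, one a suffix of the other, would force the shorter to be a border of the longer'') fails for exactly this reason, since the shorter suffix need not be a prefix of the longer one. What Proposition~1(2) actually gives you — and what the peeling step genuinely needs — is uniqueness of the factor that is \emph{both} an unbordered prefix and an unbordered suffix, namely the shortest border. The crucial extra hypothesis you must invoke, and which you have available but do not use, is that $v_1$ is an unbordered \emph{prefix} of $w$ by the definition of the decomposition, in addition to being a suffix; with that, $v_1$ must coincide with the shortest border and the induction goes through. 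So the recursion skeleton is right, but both the constructed object and the uniqueness lemma need to be changed from ``longest unbordered prefix'' / ``unique unbordered suffix'' to ``shortest border'' / ``unique unbordered prefix-and-suffix'' for the proof to be correct.
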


The computation of the unique sequence  described  in Proposition~\ref{decompo} provides a unique   \textit{unbordered-decomposition} of a word.  
For instance, for $w = \texttt{baababbabab}$ the unique unbordered-decomposition of $w$ is
$\texttt{baa}\cdot\texttt{ba}\cdot \texttt{b}\cdot \texttt{ba}\cdot \texttt{ba}\cdot \texttt{b}$.
\medskip

\noindent \textbf{Longest Successor Factor (Length and Reference) Arrays.} Here, we present the arrays that will act as a toolbox for our algorithm.
The longest successor factor of $w$ (denoted by \textit{lsf}) starting at position $i$, is the longest factor of $w$ that occurs at $i$ and has at least one other occurrence in the suffix $w[i+1 \dd n]$. The {\sl longest successor factor array} gives for each position $i$ in $w$, the length of the longest factor starting both at position $i$ and at another position $j>i$. Formally, the longest successor factor array (\textsf{LSF}$_{\ell}$) is defined as follows.
\begingroup
\setlength\abovedisplayskip{3pt}
\setlength\belowdisplayskip{3pt}
\addtolength{\jot}{-0.3em}
\[
\textsf{LSF}_{\ell}[i] = \left\{ \begin{array}{ll}
0 &\mbox{ if~~} i = n, \\ 
 \mbox{max} \{ k \mid  w[i\dd i+k-1] = w[j\dd j+k-1  \},   & \mbox{  for ~} i < j \leq n. 
\end{array} \right.
\]
\endgroup

Additionally, we define the \textsf{LSF}-\textit{Reference Array}, denoted by \textsf{LSF}$_{r}$. This array specifies, for each position $i$ of $w$, the \textit{reference} of the longest successor factor at $i$. The \textit{reference} of $i$ is defined as the position $j$ of the last occurrence of  $w[i\dd i+ \textsf{LSF}_{\ell}[i]-1]$ in $w$; we say $i$ {\em refers to}  $j$. Formally, \textsf{LSF}-\textit{Reference Array} (\textsf{LSF}$_{r}$) is defined as follows.  
\begingroup
\setlength\abovedisplayskip{3pt}
\setlength\belowdisplayskip{3pt}
\addtolength{\jot}{-0.3em}
\[
\textsf{LSF}_{r}[i] = \left\{ \begin{array}{ll}
nil &\mbox{ if~~} \textsf{LSF}_{\ell}[i]=0, \\ 
 \mbox{max} \{ j \mid  w[j\dd j+\textsf{LSF}_{\ell}[i] -1 ] = w[i\dd i+\textsf{LSF}_{\ell}[i] -1 ]  \}   &    \mbox{  for ~} i < j \leq n. 
  
       \end{array} \right.
\]
\endgroup

\noindent {\em Computation:}
Note that the longest successor factor array is a mirror image of the well-studied longest previous factor array which can be computed in  $\mathcal{O}(n)$ time for integer alphabets~\cite{CI08,CIIKRW13}.
Moreover, in \cite{CI08}, an additional array that keeps a position of some previous occurrence of the longest previous factor was presented; such position may not be the leftmost. Arrays $\textsf{LSF}_{\ell}$ and $\textsf{LSF}_{r}$ can be computed using simple modifications (pertaining to the symmetry between the longest previous and successor factors) of this algorithm\footnote{The modified algorithm also computes some starting position $j>i$ for each factor $w[i\dd i+|\textsf{LSF}_\ell[i]|-1]$, $1\leq i\leq n$. Each such factor corresponds to the lowest common ancestor of the two terminal nodes in the suffix tree of $w$ representing the suffixes $w[i\dd n]$ and $j[j\dd n]$; this ancestor can be located in constant time after linear-time preprocessing~\cite{DBLP:conf/latin/BenderF00}. A linear-time preprocessing of the suffix tree also allows for constant-time computation of the rightmost starting position of each such factor.} 
within $\mathcal{O}(n)$ time for integer alphabets; see Appendix \ref{A:Example} for an example.

\begin{remark}
For brevity, we will use \textit{lsf} and \textit{luf} to represent the longest successor factor and the longest unbordered factor, respectively.
\end{remark}


\medskip
\noindent \textbf{Problem Definition.} The \textsc{Longest Unbordered Factor Array} problem can be  defined formally as follows. 

{\defproblem{\textsc{Longest Unbordered Factor Array}}{A word $w$ of length $n$.}{An array  \textsf{LUF}$[1\dd n]$ such that \textsf{LUF}$[i]$ is the length of the maximal unbordered factor starting at position $i$ in $w$, for all $1\leq i\leq n$.}}

\begingroup
\setlength\abovedisplayskip{0pt}
\setlength\belowdisplayskip{0pt}
\addtolength{\jot}{-0.3em}
\begin{example} \label{example}
Consider  $w =\texttt{aabbabaabbaababbabab}$, then the longest unbordered factor array is as follows. (\noindent Observe that $w$ is unbordered thus  $\mu(w) = |w| = 20$.)
\begin{table}[h]
\setlength\tabcolsep{3pt}
	\begin{center}
    \scalebox{1}{
        \begin{tabular}{|c|c|c|c|c|c|c|c|c|c|c|c|c|c|c|c|c|c|c|c|c|} \hline
~~$i~~$& 1 &  2&  3&  4&  5 &  6&  7&   8&  9&   10& 11& 12& 13 &14& 15& 16& 17& 18& 19& 20\\ \hline        

$w[i]$ &      \texttt{a~}&   \texttt{a~}&  \texttt{b~}&  \texttt{b~}&  \texttt{a~}&    \texttt{b~}&  \texttt{a~}&   \texttt{a~}&  \texttt{b~}&   \texttt{b~}&   \texttt{a~}&   \texttt{a~}&    \texttt{b~}&  \texttt{a~}&    \texttt{b~}&  \texttt{b~}&   \texttt{a~}&   \texttt{b~}&   \texttt{a~}&  \texttt{b~}\\ \hline \hline 

$\textsf{LUF[i]}$ &  20& 3&  12&  9& 12&  3& 14 & 3 &11 & 3&  10&  5&    2&  3  &  5&   2&   2&     2&  2&  1\\
\hline
\end{tabular}}
\end{center}
\end{table} \label{ex1} \vspace{-0.8cm}
\end{example}
\endgroup


\section{Combinatorial Tools}
\label{sec:comb}
The core of our algorithm  exploits the unique unbordered-decomposition of all suffixes of $w$ in order to compute the length of the maximal (longest) unbordered prefix of each such suffix. Let the unbordered-decomposition of $w[i\dd n]$ be $u_k\cdots u_1$ as in Proposition~\ref{decompo}. Then $\textsf{LUF}[i] = |u_k|$.   In order to compute the unbordered-decomposition for all the suffixes  {\em efficiently}, the algorithm uses the repetitive structure of $w$ characterised by the longest successor factor arrays. 

\noindent \textbf{Basis of the algorithm.} Abstractly, it is easy to observe that for  a given   position, if  the length of the longest successor factor  is zero (no factor starting at this position repeats afterwards) then the  suffix starting at that position is necessarily unbordered. On the other hand, if the length of the longest successor factor is smaller than the length of the unbordered factor at the reference (the  position of the the last occurrence of the longest successor factor) then  the ending positions of the longest unbordered factors at this position and that at its reference will coincide; these two cases are formalised in  Lemmas \ref{start} and \ref{small} below. The remaining case is not straightforward and its handling accounts for the bulk of the algorithm. 

\begin{lemma} \label{start}
If \textup{$\textsf{LSF}_{\ell}[i] =0$} then \textup{$\textsf{LUF}[i] = n-i+1$}, for $1 \leq i\leq n$. 
\end{lemma}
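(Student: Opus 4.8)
The plan is to deduce from the hypothesis that the \emph{entire} suffix $w[i\dd n]$ is unbordered; since every factor of $w$ starting at position $i$ has length at most $n-i+1$, and the suffix $w[i\dd n]$ is itself a factor starting at $i$, this will immediately give $\textsf{LUF}[i]=n-i+1$. So the whole content is reduced to the claim: if $\textsf{LSF}_{\ell}[i]=0$ then $w[i\dd n]$ has no non-empty border.

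To prove that claim I would argue by contradiction. Suppose $w[i\dd n]$ has a non-empty border $u$. By the definition of a border, $u$ is simultaneously a proper prefix and a suffix of $w[i\dd n]$; being a proper prefix it occurs at position $i$ and has length $1\le |u|<n-i+1$, while being a suffix it also occurs at position $n-|u|+1$. From $|u|<n-i+1$ we get $n-|u|+1>i$, so the factor $w[i\dd i+|u|-1]=w[n-|u|+1\dd n]$ of length $|u|\ge 1$ occurs both at position $i$ and at a strictly larger position. This forces $\textsf{LSF}_{\ell}[i]\ge |u|\ge 1$, contradicting $\textsf{LSF}_{\ell}[i]=0$. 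Hence $w[i\dd n]$ is unbordered, as required.

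The boundary case $i=n$ is consistent with this: there $\textsf{LSF}_{\ell}[n]=0$ by definition and the claimed value is $n-n+1=1$, matching the length of the one-letter (hence unbordered) suffix $w[n\dd n]$. I do not anticipate a real obstacle here — the entire argument is the one-line observation that a non-empty border would witness a recurrence of a prefix of the suffix, which is precisely what $\textsf{LSF}_{\ell}[i]=0$ forbids; the only thing to be careful about is keeping the inequality $|u|<n-i+1$ (i.e. using that the border is \emph{proper}) so that the second occurrence is genuinely at a position greater than $i$.
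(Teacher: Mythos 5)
Your proof is correct and matches the paper's own argument: assume for contradiction that $w[i\dd n]$ has a non-empty border of length $\beta>0$, observe that this forces $w[i\dd i+\beta-1]=w[n-\beta+1\dd n]$ and hence $\textsf{LSF}_{\ell}[i]\geq\beta\geq 1$, contradicting the hypothesis. The only cosmetic difference is that you spell out why the properness of the border guarantees the second occurrence lies strictly to the right of $i$, which the paper leaves implicit.
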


\begin{lemma} \label{small}
If \textup{$\textsf{LSF}_{r}[i] = j$} and \textup{$\textsf{LSF}_{\ell}[i] < \textsf{LUF}[j]$} then \textup{$\textsf{LUF}[i] = j + \textsf{LUF}[j] -i$}, for $1\leq i\leq n$. 
\end{lemma}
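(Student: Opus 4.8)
I would prove this by showing that the maximal unbordered factor starting at $i$ has the same right endpoint as the maximal unbordered factor starting at $j := \textsf{LSF}_r[i]$. Write $\ell := \textsf{LSF}_\ell[i]$ and $m := \textsf{LUF}[j]$; note $j>i$ and $j+m-1\le n$, so $w[i\dd j+m-1]$ is a well-defined factor of length $j+m-i$. The claim then reduces to two assertions: $w[i\dd j+m-1]$ is unbordered, and every factor starting at $i$ that extends past position $j+m-1$ is bordered. The only combinatorial input I plan to use is the meaning of the \textsf{LSF} arrays: (i) $w[i\dd i+\ell-1] = w[j\dd j+\ell-1]$, so $i$ and $j$ agree for $\ell$ letters; (ii) no factor of length more than $\ell$ starting at $i$ occurs again starting at a position $>i$; and (iii) $j$ is the \emph{rightmost} occurrence of $w[i\dd i+\ell-1]$. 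Duval's propositions are not needed.

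\emph{Lower bound.} Suppose $w[i\dd j+m-1]$ had a non-empty border of length $b$. Read as a prefix it is $w[i\dd i+b-1]$, and it also appears as the length-$b$ suffix of $w[i\dd j+m-1]$, i.e.\ starting at $j+m-b>i$; by (ii) this forces $b\le\ell$. Being a prefix of $w[i\dd i+\ell-1]=w[j\dd j+\ell-1]$, it equals $w[j\dd j+b-1]$. Hence $w[j\dd j+b-1] = w[j+m-b\dd j+m-1]$, and since $1\le b\le\ell<m$ this is a non-empty proper border of $w[j\dd j+m-1]$ --- contradicting that $w[j\dd j+m-1]$ is unbordered. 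So $w[i\dd j+m-1]$ is unbordered and $\textsf{LUF}[i]\ge j+m-i$.

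\emph{Upper bound.} Suppose, for contradiction, that $w[i\dd e]$ is unbordered for some $e>j+m-1$. Then $w[j\dd e]$ has length $e-j+1>m=\textsf{LUF}[j]$, so it is bordered; fix a non-empty border of length $b$, giving $w[j\dd j+b-1]=w[e-b+1\dd e]$ with $e-b+1>j$. If $b\le\ell$, then by (i) $w[j\dd j+b-1]=w[i\dd i+b-1]$, so $w[i\dd i+b-1]=w[e-b+1\dd e]$ is a non-empty border of $w[i\dd e]$, proper because $b<e-j+1<e-i+1$ --- a contradiction. If $b>\ell$, then the length-$\ell$ prefix of $w[e-b+1\dd e]$ equals that of $w[j\dd j+b-1]$, which by (i) is $w[i\dd i+\ell-1]$; thus $w[i\dd i+\ell-1]$ has an occurrence starting at $e-b+1>j$, contradicting (iii). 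Either way no such $e$ exists, so $\textsf{LUF}[i]\le j+m-i$, and the two bounds give $\textsf{LUF}[i]=j+m-i$.

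The step I expect to be the real obstacle is the case $b>\ell$ of the upper bound: the border of $w[j\dd e]$ is then too long to transport verbatim back to $i$, and the trick is to cut it down to its length-$\ell$ prefix so as to expose a copy of $w[i\dd i+\ell-1]$ lying to the right of $j$ --- this is the one place where the reference array $\textsf{LSF}_r$, rather than just the length array $\textsf{LSF}_\ell$, is used. Everything else is index bookkeeping: checking that the named factors lie within $w[1\dd n]$ and that the claimed borders are proper, for which $i<j$, $\ell<m$, $b<e-j+1$, and $j+m-1\le n$ suffice.
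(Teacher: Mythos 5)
Your proof is correct, and the argument for the \emph{lower bound} (unbordedness of $w[i\dd j+m-1]$) is essentially the one the paper gives: assume a border of length $b$, observe that the repeat of $w[i\dd i+b-1]$ at a position $>i$ forces $b\le\ell$, then transport the border to position $j$ via $w[i\dd i+\ell-1]=w[j\dd j+\ell-1]$ and contradict the unbordedness of $w[j\dd j+m-1]$. The paper phrases the bound as a strict $\beta<\ell$ by appealing to $\textsf{LSF}_r[i]=j$, but your looser $b\le\ell<m$ suffices and is cleaner.

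For the \emph{upper bound} you take a genuinely different route. The paper invokes the unbordered-decomposition of $w[j\dd n]$ (Duval's Proposition~\ref{decompo}): $w[k+1\dd n]$ with $k=j+m-1$ decomposes into unbordered prefixes of $w[j\dd k]$, each of which one then argues (implicitly using $\textsf{LSF}_r[i]=j$) is a proper prefix of $w[i\dd i+\ell-1]$, so that every extension $w[i\dd k']$ with $k'>k$ picks up one of these prefixes as a border. You instead pass to $w[j\dd e]$ for a hypothetical unbordered $w[i\dd e]$, note it must be bordered because it exceeds $\textsf{LUF}[j]$, pick one border of length $b$, and split on $b\le\ell$ versus $b>\ell$: in the first case the border transports back to a proper border of $w[i\dd e]$; in the second, the length-$\ell$ prefix of that border gives an occurrence of $w[i\dd i+\ell-1]$ starting at $e-b+1>j$, contradicting $\textsf{LSF}_r[i]=j$. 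This is more elementary and fully self-contained --- it uses only the definitions of $\textsf{LSF}_\ell$ and $\textsf{LSF}_r$ and requires neither Proposition~\ref{decompo} nor the greedy/hereditary observations about unbordered-decomposition. The paper's version, while terser, fits the decomposition machinery it reuses throughout Section~\ref{sec:comb}; your version isolates exactly where the ``rightmost occurrence'' property of $\textsf{LSF}_r$ enters, which the paper leaves implicit. Both are sound.
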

\begin{proof}
Let $k = j+\textsf{LUF}[j] -1$. 
  
We first show that $w[i\dd k]$ is unbordered.
Assume that   $w[i\dd k]$ is bordered and let $\beta$ be the length of one of its borders ($\beta < \textsf{LSF}_{\ell}[i]$ as $\textsf{LSF}_{r}[i] = j$). This implies that $w[i\dd i+\beta -1] = w[k-\beta +1\dd k]$. Since $w[i\dd i+\textsf{LSF}_{\ell}[i]-1] = w[j\dd j+ \textsf{LSF}_{\ell}[i]-1]$, we get $w[j\dd j+\beta -1] = w[k-\beta +1\dd k]$ (i.e., $w[j\dd k]$ is bordered) which is a contradiction. Moreover, $w[k+1 \dd n]$ can be factorised into prefixes of $w[j\dd k]$ (by definition of \textsf{LUF}); every such prefix is also a proper prefix of $w[i\dd i+\textsf{LSF}_{\ell}[i]-1]$ which will make every factor $w[i\dd k'],  k < k'\leq n$, to be bordered. This completes the proof.
\end{proof}

We introduce the notion of a \textit{hook} to handle finding the unbordered-decomposition of  suffixes $w[i\dd n]$ for the remaining case (i.e., when $\textsf{LSF}_{\ell}[i] \geq \textsf{LUF}[\textsf{LSF}_r[i]]$).

\begin{mydef}[Hook]
Consider a position $j$ in a length-$n$ word $w$.
Its \emph{hook} $\hook{j}{}$ is the smallest position $q$
such that $w[q\dd j-1]$ can be decomposed into unbordered prefixes of~$w[j\dd n]$.
\end{mydef}
The following observation 
provides a greedy construction of this decomposition.

\begin{observation}\label{obs:greedy}
The decomposition of a word $v$ into unbordered prefixes of another word $u$ is unique.
This decomposition can be constructed by iteratively trimming the shortest prefix of $u$ which occurs as a suffix
of the decomposed word.
\end{observation}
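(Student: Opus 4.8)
The plan is to prove Observation~\ref{obs:greedy} in two parts: first the uniqueness of the decomposition of $v$ into unbordered prefixes of $u$, and then the correctness of the greedy (shortest-prefix) construction.

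\textbf{Uniqueness.} First I would proceed by induction on $|v|$, the base case $v = \varepsilon$ being trivial (the empty decomposition). For the inductive step, suppose $v = p_1 p_2 \cdots p_m = p'_1 p'_2 \cdots p'_{m'}$ are two decompositions into unbordered prefixes of $u$. The key observation is that $p_m$ and $p'_{m'}$ are both suffixes of $v$ that are unbordered prefixes of $u$, so one of them, say $p_m$, is a suffix of the other, $p'_{m'}$ (they are both prefixes of $u$, hence comparable under the prefix order, and a prefix-and-suffix relation follows). If $p_m$ is a \emph{proper} suffix of $p'_{m'}$, then $p_m$ is both a prefix and a proper suffix of $p'_{m'}$ (being prefixes of $u$), i.e.\ $p_m$ is a nonempty border of $p'_{m'}$, contradicting that $p'_{m'}$ is unbordered. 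Hence $p_m = p'_{m'}$, and stripping this common suffix reduces to a shorter word, where the inductive hypothesis applies.

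\textbf{Greedy construction.} Next I would show the greedy rule — repeatedly trim the \emph{shortest} prefix of $u$ that occurs as a suffix of the current decomposed word — produces exactly this decomposition. By Proposition~1 (applied with $w$ taken to be the shortest prefix of $u$ occurring as a suffix of the current word), the shortest nonempty prefix of $u$ occurring as a suffix of a word $t$ is necessarily unbordered: indeed if that shortest such prefix were bordered, its shortest border would be a still shorter prefix of $u$ (a prefix of a prefix) occurring as a suffix of $t$ (a suffix of a suffix), contradicting minimality. So each greedy step removes a legitimate unbordered prefix of $u$ from the right end, and since the process terminates when the word becomes empty, it yields \emph{a} valid decomposition; by the uniqueness just proved, it is \emph{the} decomposition. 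I would also remark that the process must indeed terminate and stay within $v$: each step removes a nonempty suffix, and one should note that a decomposition of $v$ into unbordered prefixes of $u$ exists in the first place only when the greedy process never gets stuck with a nonempty residual word having no prefix of $u$ as a suffix — but since we are characterising the decomposition (assuming it exists, or defining $\hook{j}{}$ as the furthest point reachable), this is consistent.

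\textbf{Main obstacle.} The only subtle point — and the part I would be most careful about — is the interplay between the prefix order and the suffix order in the uniqueness argument: the fact that two unbordered prefixes of $u$ that are both suffixes of the same word must actually be \emph{equal}, not merely comparable. This rests crucially on Proposition~1, item~1 (a word's shortest border is unbordered) and the transitivity ``a border of a border is a border'', both already recorded in the excerpt, so no new machinery is needed; it is just a matter of assembling these facts in the right order. Everything else is a routine induction.
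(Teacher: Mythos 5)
The paper states Observation~\ref{obs:greedy} without proof, treating it as a folklore consequence of unborderedness, so there is no ``paper approach'' to compare against. Your argument is correct and is the natural one: uniqueness via the border contradiction (if two candidate final factors have different lengths, the shorter is simultaneously a proper prefix and a suffix of the longer, i.e.\ a nonempty border), and greedy correctness via the same style of contradiction applied to the shortest prefix-of-$u$ occurring as a suffix. One small point worth tightening in a written-up version: the step ``since the process terminates when the word becomes empty'' deserves its own one-line justification rather than being assumed and then discussed informally afterwards. The clean way to close the loop is to observe that if $v=p_1\cdots p_m$ is any decomposition, then the greedy pick on $v$ cannot be strictly shorter than $p_m$ (else it would be a nonempty border of $p_m$), so the greedy step strips exactly $p_m$; induction then simultaneously gives termination at the empty word and agreement with the unique decomposition, making the separate uniqueness argument optional rather than a prerequisite. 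Your invocation of Proposition~1 is also a little loose --- what you actually use is just the definition of a border --- but the reasoning itself is sound.
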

Moreover, the decomposability into unbordered prefixes of $u$ is hereditary in a certain sense:
\begin{observation}\label{obs:anyprefixes}
If a word $v$ can be decomposed into unbordered prefixes of $u$,
then every prefix of $v$ also admits such a decomposition.
\end{observation}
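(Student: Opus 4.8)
The plan is to take a decomposition witnessing the hypothesis and show that every prefix inherits one, the only delicate point being the final, possibly truncated, block. So let $v = p_k p_{k-1} \cdots p_1$ be a decomposition into unbordered prefixes of $u$, and let $v'$ be an arbitrary prefix of $v$. First I would locate $v'$ inside this decomposition: there is an index $m$ and a (possibly empty) prefix $q$ of $p_m$ with $v' = p_k \cdots p_{m+1} \cdot q$ (if $q = \varepsilon$ this reads $v' = p_k \cdots p_{m+1}$, and if $v' = v$ there is nothing to prove). The blocks $p_k, \dots, p_{m+1}$ already form a valid decomposition into unbordered prefixes of $u$, so the whole task reduces to decomposing the single word $q$.

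Next I would observe that $q$, being a prefix of $p_m$ and $p_m$ being a prefix of $u$, is itself a prefix of $u$ — although $q$ need not be unbordered. At this point I would invoke Proposition~\ref{decompo} applied to the word $q$: it produces $q = q_\ell \cdots q_1$ with each $q_i$ an unbordered prefix of $q$, hence (being a prefix of the prefix $q$ of $u$) an unbordered prefix of $u$. Concatenating, $v' = p_k \cdots p_{m+1} \cdot q_\ell \cdots q_1$ is the desired decomposition of $v'$ into unbordered prefixes of $u$, which completes the argument.

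The only genuine obstacle is the truncated last block: a prefix of an unbordered prefix of $u$ can be bordered and so is not automatically a legal block; I expect the resolution above — refining that prefix via Proposition~\ref{decompo}, whose decomposition uses only prefixes of $q$ and hence only prefixes of $u$ — to be the crux, with everything else routine bookkeeping. As an alternative, one could argue by induction on $|v|$ using the greedy trimming of Observation~\ref{obs:greedy}: peel off the rightmost block $p_1$; a prefix no longer than the remaining word is handled by the inductive hypothesis, while a longer prefix again ends in a prefix of $p_1$, a prefix of $u$, reducing to the same sub-statement about prefixes of $u$.
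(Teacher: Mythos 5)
The paper states this as an unproved observation, treating it as evident, so there is no ``paper proof'' to compare against. Your argument is correct and fills the gap carefully: you have identified the one non-trivial point, namely that the final block may be truncated to a prefix $q$ of $p_m$ that need not itself be unbordered. Your resolution is sound: $q$ is a prefix of $p_m$, hence a prefix of $u$, and Proposition~\ref{decompo} applied to $q$ yields $q = q_\ell\cdots q_1$ with each $q_i$ an unbordered prefix of $q$; since prefixes of $q$ are prefixes of $u$, each $q_i$ is an unbordered prefix of $u$, and prepending the untouched blocks $p_k,\ldots,p_{m+1}$ gives the required decomposition of $v'$. The only remark worth adding is that Observation~\ref{obs:greedy} guarantees this decomposition of $v'$ is in fact the unique one, and coincides with the greedy one, but since the claim only asserts existence that is not needed. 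Your alternative inductive sketch via greedy trimming reduces to the same crux and would work as well.
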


\begin{example} 
Consider  $w =\texttt{aabbabaabbaababbabab}$ as in Example~\ref{example}.
Observe that $\hook{18}{} = 13$: the factor $w[13\dd 17]= \texttt{ba}\cdot \texttt{b}\cdot \texttt{ba}$ can be decomposed into unbordered prefixes of $w[18\dd 20]=\texttt{bab}$. Moreover, no prefix of $w[18\dd 20]$ matches a suffix of $w[1\dd 12]=\cdots \texttt{aa}$.
\end{example}
The hook $\hook{j}{}$ has its utility when $j$ is a reference as shown in the following lemma.
\begin{lemma} \label{large}
Consider a position $i$ such that $\LSFl[i] \ge \LUF[j]$, where $j=\LSFr[i]$.
Then
\begingroup
\setlength\abovedisplayskip{0pt}
\setlength\belowdisplayskip{0pt}
\addtolength{\jot}{-0.3em}
\[
\LUF[i] =
\begin{cases}
\hook{j}{} - i & \text{if }i < \hook{j}{},\\
\LUF[j] & \text{otherwise.}
\end{cases}
\]
\endgroup
\end{lemma}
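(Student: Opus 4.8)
The plan is to fix the notation $s := \LSFl[i]$, $j := \LSFr[i]$, $L := \LUF[j]$ and $u^{*} := w[j\dd j+L-1]$. By hypothesis $s \ge L$; by the definition of $\LSFr$ we have $i < j$, $j+s-1 \le n$ and $w[i\dd i+s-1] = w[j\dd j+s-1]$; and by Proposition~\ref{decompo} the word $u^{*}$ is the longest unbordered prefix of $w[j\dd n]$, so it is unbordered with $|u^{*}| = L$. Two immediate consequences I would record first: since $s \ge L$ we have $w[i\dd i+L-1] = u^{*}$, so $w[i\dd i+L-1]$ is unbordered and $\LUF[i] \ge L$; and every unbordered prefix of $w[j\dd n]$ has length at most $L$, hence is a prefix of $u^{*}$. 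Write $\mathcal{P}$ for the set of unbordered prefixes of $w[j\dd n]$.

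The key step is the observation that $w[\hook{j}{}\dd n]$ decomposes into elements of $\mathcal{P}$: the factor $w[\hook{j}{}\dd j-1]$ does so by the definition of the hook, the suffix $w[j\dd n]$ does so via its unbordered decomposition (Proposition~\ref{decompo}), and concatenating the two pieces gives such a decomposition of $w[\hook{j}{}\dd n]$. Fixing the resulting factor boundaries $\hook{j}{} = b_g < b_{g-1} < \dots < b_1 < b_0 = n+1$, each block $w[b_\ell\dd b_{\ell-1}-1]$ lies in $\mathcal{P}$, hence has length at most $L$ and is a prefix of $u^{*}$. So for every $p \in \{\hook{j}{},\dots,n\}$, if $\ell$ is chosen with $b_\ell \le p \le b_{\ell-1}-1$, then $w[b_\ell\dd p]$ is the prefix of $u^{*}$ of length $p-b_\ell+1 \le L \le s$, which by $w[i\dd i+s-1] = w[j\dd j+s-1]$ equals $w[i\dd i+(p-b_\ell)]$; consequently, whenever $b_\ell > i$ the word $w[i\dd p]$ has the nonempty proper border $w[b_\ell\dd p]$ and is therefore bordered.

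The upper bound is now immediate: if $p \ge \max(i+L, \hook{j}{})$ and $\ell$ is as above, then $b_{\ell-1}-b_\ell \le L$ forces $b_\ell \ge p-L+1 \ge i+1 > i$, so $w[i\dd p]$ is bordered; hence $\LUF[i] \le \max(L, \hook{j}{}-i)$. For the lower bound, $\LUF[i] \ge L$ is already known, so it remains to treat the case $i < \hook{j}{}$ and show $\LUF[i] \ge \hook{j}{}-i$, which I would do in two moves. First, $w[i\dd\hook{j}{}-1]$ is unbordered: a shortest nonempty border $\beta = w[i\dd i+r-1]$ (which is unbordered, being a shortest border) would also occur at position $\hook{j}{}-r > i$, whence $r \le s$ and $\beta = w[j\dd j+r-1] \in \mathcal{P}$; but then $\beta\cdot w[\hook{j}{}\dd j-1] = w[\hook{j}{}-r\dd j-1]$ decomposes into elements of $\mathcal{P}$, contradicting the minimality of $\hook{j}{}$. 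Second, $\hook{j}{}-i \ge L$: otherwise $w[i\dd\hook{j}{}-1]$ is a proper prefix of $u^{*}$, hence, being unbordered, lies in $\mathcal{P}$, and then $w[i\dd j-1] = w[i\dd\hook{j}{}-1]\cdot w[\hook{j}{}\dd j-1]$ again decomposes into elements of $\mathcal{P}$, contradicting minimality. Combining, $\LUF[i] = \max(L, \hook{j}{}-i)$, which equals $\hook{j}{}-i$ when $i < \hook{j}{}$ and equals $L$ otherwise, as claimed.

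I expect the main obstacle to be the observation of the second paragraph: recognizing that the hook encodes precisely that $w[\hook{j}{}\dd n]$ is a concatenation of unbordered prefixes of $w[j\dd n]$, each of which is a prefix of $u^{*} = w[i\dd i+L-1]$, so that matching such a block against this prefix exhibits a border of $w[i\dd p]$ for every sufficiently long $p$. Once that is in place, the remaining arguments are routine bookkeeping about the boundary positions $b_\ell$ together with a single shortest-border argument leaning on the minimality of $\hook{j}{}$.
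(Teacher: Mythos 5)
Your proof is correct and rests on the same two pillars as the paper's: (1) that $w[\hook{j}{}\dd n]$ decomposes into unbordered prefixes of $u^{*}$, each of which also occurs as a prefix at position $i$, which forces $w[i\dd p]$ to be bordered once $p$ is large enough; and (2) that a nonempty border of $w[i\dd\hook{j}{}-1]$ would be a prefix of $w[j\dd n]$ extending the decomposition past $\hook{j}{}$, contradicting minimality. The organization is slightly different and, if anything, a bit tighter than the paper's. The paper splits into cases $i<\hook{j}{}$ and $i\ge\hook{j}{}$ and, in the latter, asserts that the decomposition of $w[\hook{j}{}\dd n]$ ``yields'' a decomposition of $w[i\dd n]$ starting with $u^{*}$ — which tacitly relies on $i$ being a boundary of that decomposition (true, because $u^{*}$ occurring strictly inside a block would force that block to have a border, but the paper doesn't say so). Your version avoids that subtlety by proving a single upper bound $\LUF[i]\le\max(L,\hook{j}{}-i)$ directly from the block structure for all $p\ge\max(i+L,\hook{j}{})$, and then handling the two cases purely via the lower bounds $\LUF[i]\ge L$ and (when $i<\hook{j}{}$) the unborderedness of $w[i\dd\hook{j}{}-1]$ together with $\hook{j}{}-i\ge L$. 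That last inequality, which the paper leaves implicit, is a nice explicit check that makes the $\max$ collapse cleanly to the claimed piecewise formula.
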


\begin{proof}
Let $u= w[j\dd j+\textsf{LUF}[j]-1]$, $v = w[i\dd i+\LSFl[i]-1]$, and $q = \hook{j}{}$.
Observe that $u$ occurs at position $i$ and that $w[q\dd n]$ can be decomposed into unbordered prefixes of $u$.

\noindent
\textbf{Case $a$: $i < q$.}
We shall prove that $w[i\dd q-1]$ is the longest unbordered prefix of $w[i\dd n]$; see~Fig.~\ref{fig:large}.
By~Observation~\ref{obs:anyprefixes}, any longer factor $w[i\dd k],~q\le k\le n$ has a suffix $w[q\dd k]$ composed of unbordered prefixes of $u$.
Thus, $w[i\dd k]$ must be bordered, because $u$ is its prefix.
To conclude, for a proof by contradiction suppose that $w[i\dd q-1]$ has a border $v'$.
Note that $|v'|\le \LSFl[i]$, so $v'$ is a prefix of $v$.
Hence, it occurs both as a suffix of $w[1\dd q-1]$ and a prefix of $w[j\dd n]$,
which contradicts the greedy construction of $q=\hook{j}{}$ (Observation~\ref{obs:greedy}).

\noindent
\textbf{Case $b$: $i\ge q$.}
The decomposition of $w[q\dd n]$ into unbordered prefixes of $u$ yields a decomposition
of $w[i\dd n]$ into unbordered prefixes of $u$, starting with $u$.
This is the unbordered-decomposition of $w[i\dd n]$ (see Proposition~\ref{decompo}),
which yields $\LUF[i]=|u|=\LUF[j]$.
\end{proof}

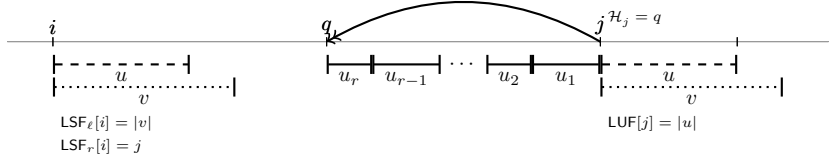
\begin{figure}[t]
\begin{center}
\begin{tikzpicture}[scale=0.6, every node/.style={scale=0.8}]
\draw [fill, ColorD8] (0,0) rectangle (18,0);

\foreach \x/\t in {1/i, 7/q, 13/j, 16/}	
  	\draw [black] (\x,-0.1) -- (\x, 0.1) node [above, midway] (\t) {$\t$}; 

\foreach \x/\y/\z/\t/\c in {1/5/-1/v/dotted, 13/17/-1/v/dotted, 13/16/-0.5/u/dashed,1/4/-0.5/u/dashed}
{
  	\draw [thick, \c, |-|] (\x,\z) --  (\y,\z) node [below, midway] {$\t$} ; 
};

\foreach \x/\t in {1/i, 7/q, 13/j, 16/}	
  	\draw [black] (\x,-0.1) -- (\x, 0.1) node [above, midway] {$\t$}; 
\foreach \x/\y/\t in {13/11.5/1, 11.5/10.5/2, 9.5/8/r-1}	
{
  	\draw [thick, |-|] (\x,-0.5) --  (\y,-0.5) node [below, midway] {$u_{\t}$} ; 
    
};
\node at (10,-0.5) {$\cdots$} ;
\draw [thick, |-|] (8,-0.5) --  (7,-0.5) node [below, midway] {$u_{r}$} ;

\node at (1,-1.8) [right] {\scriptsize{$\textsf{LSF}_\ell[i]=|v|$}};
\node at (1,-2.3) [right] {\scriptsize{$\textsf{LSF}_r[i]=j$}};
\node at (13,-1.8) [right] {\scriptsize{$\textsf{LUF}[j]=|u|$}};
\node at (13,0.5) [right] {\scriptsize{$\hook{j}{ }{}=q$}};
\draw [thick, ->>] (13,0) to [bend right]  (7,0) ;

\end{tikzpicture}
\end{center}
\caption{Case $a$ ($i < q$): The unbordered-decomposition of $w[i\dd n]$ consists of $w[i\dd  q-1]$ as the longest unbordered prefix, followed by a sequence of unbordered prefixes of $u$, including $u$ itself at position $j$. Therefore, $\LUF[i] = q-i$.} 
\label{fig:large}
\end{figure}


\section{Algorithm}
\label{sec:algo}
\alglanguage{pseudocode}

The algorithm operates in two  phases: a preprocessing phase followed by the main computation phase. The preprocessing  phase accomplishes the following: 
Firstly, compute the longest successor factor array $\textsf{LSF}_{\ell}$ together with   $\textsf{LSF}_{r}$ array. If $\textsf{LSF}_{r}[i]=j$ then we say $i$ {\em refers to}  $j$ and mark $j$ in a boolean array ($\textsf{IsReference}$) as a reference. 

In the main  phase, the algorithm computes the lengths of the longest unbordered factors for all positions in $w$. 
Moreover, it determines $\HOOK[j]=\hook{j}{}$ for each \emph{potential reference},
i.e., each position $j$ such that $j=\LSFr[i]$ and $\LSFl[i] \ge \LUF[j]$ for some $i<j$; see Lemma~\ref{large}.%

Positions are processed from right to left (in decreasing order) so that if $i$  refers to $j$, then $\textsf{LUF}[j]$ (and $\HOOK[j]$, if necessary) has already been computed before $i$ is considered. For each position $i$, the value of $\textsf{LUF}[i]$ is determined as follows:
\begin{enumerate}
\item  If $\textsf{LSF}_{\ell}[i] = 0$, then $\textsf{LUF}[i] = n -i +1$.

\item  Otherwise \begin{enumerate}
\item  If $\textsf{LSF}_{\ell}[i] <  \textsf{LUF}[j]$, then $\textsf{LUF}[i] = j+ \textsf{LUF}[j] -i$.

\item  If $\textsf{LSF}_{\ell}[i] \geq \textsf{LUF}[j]$ and $i \geq  \textsf{HOOK}[j]$,  then $\textsf{LUF}[i]= \textsf{LUF}[j]$.

\item If  $\textsf{LSF}_{\ell|}[i] \geq \textsf{LUF}[j]$ and $i < \textsf{HOOK}[j]$, then $\textsf{LUF}[i]= \textsf{HOOK}[j]-i$.
\end{enumerate}

\end{enumerate}

If $i$ is a potential reference, then $\HOOK[i]$ is also computed, as described in the Section~\ref{sec:hook} (see Algorithm~\ref{algo:1}  in Appendix~\ref{Appendix_Pseudo_Main} for a pseudo-code of the main algorithm). 
It is evident that the computational phase of the algorithm fundamentally reduces to finding the hooks for potential references; for brevity, the term reference will mean a potential reference  hereafter.


\subsection{Finding Hook (\textsc{FindHook} Function)}\label{sec:hook}

\subparagraph{Main idea} When  \textsc{FindHook} is called on a reference $j$, it must return $\hook{j}{}$. 
A simple greedy approach follows directly from Observation~\ref{obs:greedy}; see also Figure~\ref{fig:hook}.
Initially, the factor $w[1\dd j-1]$ is considered and the shortest suffix of $w[1\dd j-1]$ which is a prefix of $w[j\dd n]$ is computed. Then this suffix, denoted $u_{1}=w[i_1\dd j-1]$,
is truncated (chopped) from the considered factor $w[1\dd j-1]$; the next factor considered will be $w[1\dd i_1-1]$. In general, we  iteratively compute and  truncate  the shortest prefixes of $w[j\dd n]$ from the right end of the considered factor; shortening the length of the considered factor in each iteration and terminating as soon as no prefix of $w[j\dd n]$ can be found. If the considered factor at termination is $w[1\dd q-1]$, position $q$ is returned by the function as $\hook{j}{}$. 

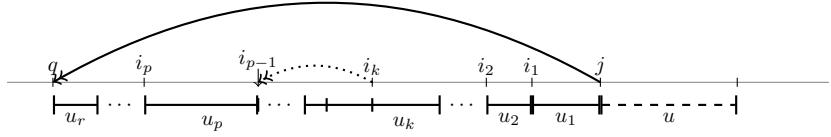
\begin{figure}[t]
\begin{center}
\vspace{-1em}
\begin{tikzpicture}[scale=0.6, every node/.style={scale=0.8}]
\draw [fill, ColorD8] (0,0) rectangle (18,0);

\foreach \x/\t in {1/q, 13/j, 16/}	
  	\draw [black] (\x,-0.1) -- (\x, 0.1) node [above, midway] (\t) {$\t$}; 

\foreach \x/\y/\z/\t/\c in {13/16/-0.5/u/dashed}
{
  	\draw [thick, \c, |-|] (\x,\z) --  (\y,\z) node [below, midway] {$\t$} ; 
};

\foreach \x/\y/\t in {13/11.5/1, 11.5/10.5/2}	
{
  	\draw [thick, |-|] (\x,-0.5) --  (\y,-0.5) node [below, midway] {$u_{\t}$} ; 
    \draw [black] (\y,-0.1) -- (\y, 0.1) node [above, midway] {$i_{\t}$};
};
\foreach \x/\y in {9.5/6.5, 5.5/3}	
{
  	\draw [thick, |-|] (\x,-0.5) --  (\y,-0.5) node [below, midway] {} ;  
};
\draw [thick, |-|] (2,-0.5) --  (1,-0.5) node [below, midway] {$u_{r}$} ; 
\foreach \y in {8, 7,5.5}	
{
    \draw [black, thick] (\y,-0.35) -- (\y, -0.65) node [above, midway] {};
};
\foreach \x/\t in {8/k, 3/p}	
{
    \draw [black] (\x,-0.1) -- (\x, 0.1) node [above, midway] {$i_{\t}$};
};
\foreach \x/\t in {8.7/k, 4.5/p}	
{
    \node at (\x,-0.9) {$u_{\t}$} ;
};
\draw [black, ->] (5.5,0.3) -- (5.5, -0.1) node [above, midway] {$i_{p-1}$};

\node at (10,-0.5) {$\cdots$} ;
\node at (6,-0.5) {$\cdots$} ;
\node at (2.5,-0.5) {$\cdots$} ;

\draw [thick, ->>] (13,0) to [bend right]  (1,0) ;
\draw [thick, dotted, ->>] (8,0) to [bend right]  (5.5,0) ;

\end{tikzpicture}
\end{center}

\caption{A chain of consecutive shortest prefixes of $w[j\dd n]$ starting at positions $i_1 > i_2 > \cdots > i_r=q$. No prefix of $w[j\dd n]$ is a suffix of $w[1\dd q-1]$, so the hook value of position $j$ is $\hook{j}{} = q$. Meanwhile, $\HOOK[i_k]$ is set to $i_{p-1}$ in order to avoid iterating through $i_{k+1},\ldots,i_{p-1}$ again.}  
\label{fig:hook}
\end{figure}

The factors $w[q\dd j-1]$ considered by successive calls of \textsc{FindHook} function may overlap.  Moreover, the same \textit{chains} of consecutive unbordered prefixes may be computed several times throughout the algorithm. To expedite the chain computation in the subsequent calls of \textsc{FindHook} on another reference $j'$ ($j' < j$), we can \textit{recycle} some of the computations done for  $j$ by shifting the value $\HOOK[\cdot]$ of each such index (at which a  prefix was cut for $j$) leftwards (towards its final value). Consider the starting position $i_k$ at which $u_{k}$ was cut (i.e., $u_{k} = w[i_k\dd i_{k-1}-1]$ is the shortest unbordered prefix of $w[j\dd n]$ computed at $i_{k-1}$). Let $i_p$ be the first position considered after $i_k$ such that $|u_{p}| > |u_{k}|$. In this case, every factor $u_{k+1},\ldots,u_{p-1}$ is a prefix of $u_{k}$; see Figure~\ref{fig:hook}. Therefore, $w[i_{p-1}\dd i_{k}-1]$
can be decomposed into prefixes of $u_k$ (and of $w[i_k\dd n]$).
Consequently, we set $\textsf{HOOK}[i_k]= i_{p-1}$ so that the next time a prefix of length greater than  or equal to $|u_{k}|$ is cut at $i_k$, we do not have to repeat truncating the prefixes $u_{k+1},\ldots,u_{p-1}$ and we may start directly from position $i_{p-1}$.

In order to express the intermediate values in the $\HOOK$ table, we generalize the notion of $\hook{j}{}$:
for a position $j$ and a length $\ell$, we define $\hook{j}{\ell}$ as the largest position $q$ such that $w[q\dd j-1]$
can be decomposed into unbordered prefixes of $w[j\dd n]$ whose lengths do not exceed $\ell$. 
Observe that $\hook{j}{0}=j$ and $\hook{j}{\ell} = \hook{j}{}$ if $\ell \ge \LUF[j]$.
 
\subparagraph{Implementation} For each position $i_k$, we set $\HOOK[i_k] = \hook{i_k}{|u_k|}$, equal to $i_{p-1}$ in the case considered above.
Computing these values for all indices $i_k$ can be efficiently realised using a stack. Every starting position $i_p$, at which $u_{p}$ is cut, is pushed onto the stack as a (length, position) pair $(|u_{p}|, i_p)$. Before pushing, every element $(|u_k|, i_k)$ such that $|u_{k}| < |u_{p}|$ is popped and the hook value of index  $i_k$ is updated ($\textsf{HOOK}[i_k] = \hook{i_k}{|u_k|}=i_{p-1} = i_p + |u_p|$). A pseudo-code implementation of this function is given below as Algorithm~\ref{algo:2}.

\begin{algorithm}[h]
\alglanguage{pseudocode}
\begin{algorithmic}[1]

\Function{$\textsc{FindHook}$}{$j$}
\State $\texttt{st} \leftarrow \textsc{NewStack}()$
\State $q \leftarrow \textsf{HOOK}[j]$
\State $\beta \leftarrow  \textsc{FindBeta}(q, j)$ \Comment{\mbox{\scriptsize{the length of the shortest prefix of $w[j\dd n]$ ending at $q-1$}}}
\While{$(\beta \neq  0)$}
\State $\textsc{HandlePopping}(\texttt{st}, j, q, \beta)$
\State $\textsc{Push}(\texttt{st}, (\beta,q-\beta))$
\State $q \leftarrow \textsf{HOOK}[q-\beta]$
\State $\beta \leftarrow \textsc{FindBeta}(q, j)$
\EndWhile

\State $\textsc{HandlePopping}(\texttt{st}, j, q, \infty)$

\State \Return $q$ \Comment{\mbox{\scriptsize{returns $\hook{j}{}$ }}}
\EndFunction
\medskip
\Function{$\textsc{HandlePopping}$}{$\texttt{st}, j, q, \beta$}
\While{$\textsc{IsNotEmpty}(\texttt{st})$ and $\textsc{Length}(\textsc{Top}(\texttt{st})) < \beta$} 
 \State $(\texttt{length,pos}) \leftarrow \textsc{Pop}(\texttt{st})$
 \State $\textsf{HOOK}[\texttt{pos}] \leftarrow q$\Comment{\mbox{\scriptsize{$q = \hook{\texttt{pos}}{\texttt{length}}$}}}

 \EndWhile

\EndFunction
\end{algorithmic}
\caption{Return $\hook{j}{}$ and set $\textsf{HOOK}[i] \leftarrow \hook{i}{\beta}$ for each $(\beta,i)$ pushed onto the stack}
\label{algo:2}
\end{algorithm}

\subparagraph{Analysis} Throughout the algorithm, each unbordered prefix $u_{p}$  at  position $i_p$ is computed just once by the \textsc{FindHook} function. Nevertheless, a longer\footnote{It will be easy to deduce after Lemma \ref{lemma_childtwinset} that the length of the prefix cut (the next time) at the same position will be at least twice the length of the current prefix cut at it.}
 unbordered prefix $u'_{p}$  may  be computed at $i_p$ again when \textsc{FindHook}  is called on reference $j'$ (where $q < j' < j$).

In what follows, we introduce certain characteristics of the computed unbordered prefixes which aids in establishing the relationship between the stacks of various references. Let $\mathcal{S}_j$ be the set of positions pushed onto the stack during a call of  $\textsc{FindHook}$ on reference $j$.

\begin{mydef}[Twin Set]
A \textit{twin set} of reference $j$ for length $\ell$, denoted by $\mathcal{T}_j^{\ell}$, is the set of all the positions $i\in \mathcal{S}_j$ which were pushed  onto the stack paired with length $\ell$  in the call of \textsf{FindHook} on reference $j$  (i.e., $\mathcal{T}_j^{\ell} = \{ i ~|~ (\ell, i) \text{ was pushed onto the stack of }j\}$). 
\end{mydef}
Note that a {\em unique} shortest unbordered prefix of $w[j\dd \textsf{LUF}[j]-1]$ occurs at each $i$ belonging to the same twin set. However, as and when a longer prefix at $i$ is cut (say $\ell'$) for another reference $j' < j$, $i$ will be added to $\mathcal{T}_{j'}^{\ell'}$. 
\begin{remark}
$\mathcal{S}_{j} = \bigcup \limits_{\ell=1}^{\textsf{LUF}[j]} \mathcal{T}_{j}^{\ell}$. 
\end{remark}

Hereafter, a twin set will essentially imply a non-empty twin set.

\begin{lemma}
\label{lemma_maxhook}
If $j'$ and $j$ are references such that $j' \in \Stack_{j}$, then $\hook{j}{} \le  \hook{j'}{}$.
\end{lemma}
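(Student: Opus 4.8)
The plan is to argue entirely in terms of the relation ``$v$ can be decomposed into unbordered prefixes of $x$'', abbreviated $D(v,x)$. The argument rests on two closure properties. First, by Observation~\ref{obs:anyprefixes}, if $D(v,x)$ holds and $v'$ is a prefix of $v$, then $D(v',x)$. Second, $D$ is transitive: if $D(v,x)$ and $D(x,y)$ hold, then so does $D(v,y)$ --- writing $v=z_1\cdots z_m$ with each $z_t$ an unbordered prefix of $x$, each $z_t$ is in particular a prefix of $x$ and hence, by the first property, satisfies $D(z_t,y)$; concatenating these decompositions yields $D(v,y)$. I also use two facts about a suffix $w[a\dd n]$: by definition, $\hook{a}{}$ is the smallest $q$ with $D(w[q\dd a-1],w[a\dd n])$; and by Proposition~\ref{decompo}, $D(w[a\dd n],u_a)$ holds, where $u_a:=w[a\dd a+\LUF[a]-1]$ is the longest unbordered prefix of $w[a\dd n]$. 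Note finally that $D(v,w[a\dd n])$ and $D(v,u_a)$ are equivalent, since every unbordered prefix of $w[a\dd n]$ has length at most $|u_a|$ and is therefore a prefix of $u_a$, and conversely.

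Next I would unpack the hypothesis $j'\in\Stack_j$. During $\textsc{FindHook}(j)$ the current factor $w[1\dd q-1]$ shrinks only by chopping off shortest unbordered prefixes of $w[j\dd n]$ and by following precomputed links $\HOOK[\cdot]$, each of which --- by definition of the generalized hook $\hook{\cdot}{\ell}$ --- stands for a decomposition of the skipped factor into unbordered prefixes of the relevant suffix. Hence every position visited, and in particular every position pushed onto the stack, is a $q$ with $D(w[q\dd j-1],w[j\dd n])$. Applying this to $j'$ and abbreviating $u:=u_j$, we get $D(w[j'\dd j-1],u)$; combining it with $D(w[j\dd n],u)$ from Proposition~\ref{decompo} and using $w[j'\dd n]=w[j'\dd j-1]\cdot w[j\dd n]$, the two decompositions concatenate to give $D(w[j'\dd n],u)$.

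The conclusion then follows in two lines. Put $q:=\hook{j'}{}$, so that $D(w[q\dd j'-1],w[j'\dd n])$. Transitivity with $D(w[j'\dd n],u)$ gives $D(w[q\dd j'-1],u)$, and concatenating this with $D(w[j'\dd n],u)$ gives $D(w[q\dd n],u)$ because $w[q\dd n]=w[q\dd j'-1]\cdot w[j'\dd n]$. Its prefix $w[q\dd j-1]$ inherits $D(w[q\dd j-1],u)$, i.e.\ $D(w[q\dd j-1],w[j\dd n])$. Since $\hook{j}{}$ is the smallest position with this property, $\hook{j}{}\le q=\hook{j'}{}$, which is the claim.

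The main thing to be careful about is the claim used in the second paragraph, that $\textsc{FindHook}(j)$ visits only positions $q$ with $D(w[q\dd j-1],w[j\dd n])$. This is immediate for the ``chop a shortest unbordered prefix'' steps; for a $\HOOK[\cdot]$ step one must check that when $\HOOK[i]=\hook{i}{\ell}$ was recorded during an earlier call and $\textsc{FindHook}(j)$ now cuts at $i$ an unbordered prefix of $w[j\dd n]$ of length $\beta\ge\ell$, the unbordered prefixes of $w[i\dd n]$ of length at most $\ell$ are also unbordered prefixes of $w[j\dd n]$ --- which holds because $w[i\dd n]$ and $w[j\dd n]$ then agree on their first $\beta\ge\ell$ symbols. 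The inequality $\beta\ge\ell$ is the monotonicity of the lengths cut at a fixed position across successive \textsc{FindHook} calls (the same observation behind the ``doubling'' footnote in the analysis of \textsc{FindHook}). Everything else is a routine application of Observations~\ref{obs:greedy}--\ref{obs:anyprefixes} and Proposition~\ref{decompo}.
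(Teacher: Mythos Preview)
Your argument is correct and follows essentially the same route as the paper's proof: from $j'\in\Stack_j$ you derive that $w[j'\dd n]$ decomposes into unbordered prefixes of $w[j\dd n]$, then refine any decomposition into unbordered prefixes of $w[j'\dd n]$ into one for $w[j\dd n]$, and apply this to $w[\hook{j'}{}\dd n]$; restricting to the prefix $w[\hook{j'}{}\dd j-1]$ gives the conclusion. Your $D(\cdot,\cdot)$ notation and the explicit transitivity statement just make the paper's compressed sentence ``any decomposition into unbordered prefixes of $w[j'\dd n]$ yields a decomposition into unbordered prefixes of $w[j\dd n]$'' precise.

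Where you go beyond the paper is in your final paragraph: the paper simply asserts that $j'\in\Stack_j$ implies $w[j'\dd n]$ decomposes into unbordered prefixes of $w[j\dd n]$, whereas you observe that this needs an invariant on \textsc{FindHook} (because of the $\HOOK[\cdot]$ shortcuts) and sketch why it holds. That extra care is appropriate; the paper relies on the annotated invariant $\HOOK[\texttt{pos}]=\hook{\texttt{pos}}{\texttt{length}}$ in Algorithm~\ref{algo:2} without spelling it out in the proof of this lemma. Your justification that $\beta\ge\ell$ when a $\HOOK$ link at $i$ is followed is the right idea (and is what the paper leverages later in Lemma~\ref{lemma_childtwinset}), though a fully formal inductive proof of the \textsc{FindHook} invariant would require a bit more than the one sentence you give it.
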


\begin{proof}
Since $j'\in \Stack_j$, the suffix $w[j'\dd n]$ (and, by Observation~\ref{obs:anyprefixes}, its every prefix $w[j'\dd k]$) can be decomposed into unbordered prefixes of $w[j\dd n]$.
Consequently, any decomposition into unbordered prefixes of $w[j'\dd n]$ yields a decomposition into unbordered prefixes of $w[j\dd n]$. 
In particular, $w[\hook{j'}{}\dd n]$ admits such a decomposition, which implies $\hook{j}{} \le \hook{j'}{}$.
\end{proof}

If the stack $\mathcal{S}_j$ is the most recent stack containing a reference $j'$,
we say that $j'$ is the parent of $j$. More formally, the parent of $j'$ is defined as $\min\{j : j' \in \mathcal{S}_j\}$.
If $j'$ does not belong to any stack (and thus has no parent), we will call it a \textit{\textbf{base reference}}.

\begin{lemma}
\label{lemma_childtwinset}
If $j$ and $j'$ are two references such that $j$ is the parent of $j'$ and $j' \in \mathcal{T}_{j}^{\ell}$, then 
each position $i\in \Stack_{j'}$ satisfies the following properties:
\begin{enumerate}
\item \label{l1} $i\in \mathcal{T}_{j}^{\ell}$;
\item \label{l2} there exists $k\in \mathcal{T}_{j}^{\ell'}$, with $\ell' > \ell$, such that $(k+\ell'-i, i)$ is pushed onto the stack of $j'$.
\end{enumerate}
\end{lemma}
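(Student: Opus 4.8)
I would prove both properties together by induction on the sequence of positions pushed by $\textsc{FindHook}(j')$, run in parallel with the computation of $\textsc{FindHook}(j)$. If $\Stack_{j'}=\emptyset$ the statement is vacuous, so assume otherwise.

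First some preliminaries. Since $j'\in\mathcal{T}_j^\ell$, the prefix $P:=w[j'\dd j'+\ell-1]=w[j\dd j+\ell-1]$ is unbordered, so the unbordered prefixes of $w[j'\dd n]$ of length at most $\ell$ are exactly those of $w[j\dd n]$ of length at most $\ell$. Because $j$ is the parent of $j'$, the value $\textsf{HOOK}[j']$ read when $\textsc{FindHook}(j')$ starts is the one $\textsc{FindHook}(j)$ wrote when it popped $(\ell,j')$, i.e. $\hook{j'}{\ell}$: no call executed between the two could have touched $\textsf{HOOK}[j']$, since that would mean pushing $j'$ onto a stack more recent than $\Stack_j$, contradicting $j=\min\{j''\colon j'\in\Stack_{j''}\}$. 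Finally, that pop was caused by a longer prefix being cut rather than by the closing $\textsc{HandlePopping}(\cdot,\cdot,\cdot,\infty)$ call: in the latter case $\hook{j'}{\ell}=\hook{j}{}$, which together with Lemma~\ref{lemma_maxhook} ($\hook{j}{}\le\hook{j'}{}$) and the trivial $\hook{j'}{}\le\hook{j'}{\ell}$ forces $\hook{j'}{\ell}=\hook{j'}{}$, hence $\Stack_{j'}=\emptyset$. So $\hook{j'}{\ell}=k+\ell'$ for some $k\in\mathcal{T}_j^{\ell'}$ with $\ell'>\ell$ and $w[k\dd k+\ell'-1]=w[j\dd j+\ell'-1]$ an unbordered prefix of $w[j\dd n]$.

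The invariant I would carry is: each time $\textsc{FindHook}(j')$ is about to process a position $q$, one has $q=k+\ell'$ for some $k\in\mathcal{T}_j^{\ell'}$ with $\ell'>\ell$ --- in words, $q$ is the right endpoint of a prefix that $\textsc{FindHook}(j)$ cut with length exceeding~$\ell$. The preliminaries give the base case. In the inductive step, with $q=k+\ell'$, the shortest prefix $w[q-\beta\dd q-1]$ of $w[j'\dd n]$ ending at $q-1$ satisfies $\beta>\ell$ (a prefix of length at most $\ell$ ending there would, by the first preliminary, be a prefix of $w[j\dd n]$ occurring as a suffix of the unbordered word $w[k\dd k+\ell'-1]$, i.e. a nonempty border of it). Writing $i=q-\beta$ and $v=w[i\dd q-1]=w[j'\dd j'+\beta-1]$, the key step is to locate $i$ inside the tiling of $[\hook{j}{},\,k-1]$ by unbordered prefixes of $w[j\dd n]$ that $\textsc{FindHook}(j)$ builds after cutting $w[k\dd k+\ell'-1]$: one checks that $v$ agrees with that tiling from its right end, that $i$ is a cut-start of $\textsc{FindHook}(j)$, and --- since $w[i\dd i+\ell-1]=P$ is unbordered, so no prefix of $w[j\dd n]$ shorter than $\ell$ ends at $i+\ell-1$ --- that the cut at $i$ has length exactly $\ell$, i.e. $i\in\mathcal{T}_j^\ell$; this is Property~\ref{l1}. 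Since chain intervals of $\textsc{FindHook}(j)$ are pairwise disjoint, $[i,i+\ell-1]$ and $[k,k+\ell'-1]$ do not overlap, so $i<k$ and $\beta=q-i>\ell'$; consequently the pair pushed for $i$ is $(\beta,i)=(k+\ell'-i,i)$ with $k\in\mathcal{T}_j^{\ell'}$, $\ell'>\ell$, which is Property~\ref{l2}. The position processed next by $\textsc{FindHook}(j')$ is $\textsf{HOOK}[i]=\hook{i}{\ell}$, and since $(\ell,i)$ was popped in $\textsc{FindHook}(j)$ exactly when a longer prefix was cut, $\hook{i}{\ell}$ is again the right endpoint of a window of $\textsc{FindHook}(j)$ of length exceeding~$\ell$, restoring the invariant (or the region is exhausted and $\textsc{FindHook}(j')$ stops).

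The crux is this ``matching'' step: proving that the greedy shortest-prefix walk for $w[j'\dd n]$ started from $\hook{j'}{\ell}$ follows, cut for cut, the sub-walk of the greedy shortest-prefix walk for $w[j\dd n]$ that visits precisely the length-$\ell$ cut positions of $\textsc{FindHook}(j)$, each $\textsc{FindHook}(j')$-cut stretching from such a position to the right end of the next $\textsc{FindHook}(j)$-window of length $>\ell$. This requires careful bookkeeping of how $\textsf{HOOK}$-jumps encode decompositions into short unbordered prefixes and of how the two greedy processes interleave; in particular one must rule out $\ell<\beta\le\ell'$ (where the fact that unbordered words have no nontrivial period, applied to $w[k\dd k+\ell'-1]$, enters) and handle the case where $\textsf{HOOK}[i]$ was moved leftward by a reference processed between $j$ and $j'$ --- absorbed, as in the proof of Lemma~\ref{lemma_maxhook}, by invoking the present statement for that intermediate reference. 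The whole argument is thus really a simultaneous induction over all references in the order they are processed.
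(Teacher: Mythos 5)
Your proposal follows essentially the same plan as the paper: establish that $\textsf{HOOK}[j']=\hook{j'}{\ell}$ via the parent relation, analyse the first pair pushed onto $\mathcal{S}_{j'}$, and propagate by ``similar reasoning'' to later pushes. The main difference is in how you locate $i=q-\beta$ inside the structure built by $\textsc{FindHook}(j)$. You do it operationally, via a ``tiling/matching'' argument between the two greedy walks, and you yourself flag the resulting obligations as the crux (agreeing cut-for-cut with $\textsc{FindHook}(j)$, ruling out $\ell<\beta\le\ell'$ via aperiodicity of $w[k\dd k+\ell'-1]$, and absorbing $\textsf{HOOK}$ moves by intermediate references). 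The paper sidesteps all of this declaratively: it observes that $z=w[i\dd p-1]$ occurs at $j'$, hence inherits the decomposition of $w[j'\dd n]$ into unbordered prefixes of $w[j\dd n]$ starting with $v$ (because $|z|>\ell$), and that $w[p\dd j'-1]$ decomposes into unbordered prefixes of $v$; concatenating the three decompositions gives the unique decomposition of $w[i\dd n]$ with first factor $v$, which is Property~1, and the factor $v'=w[k\dd p-1]$ ending at $p-1$ in that decomposition has $|v'|>\ell$ directly from the greedy definition of $\hook{j'}{\ell}$, which is Property~2. This gets $\beta=|z|=k+\ell'-i\ge\ell+\ell'$ with no separate periodicity argument, and it makes your Lemma~\ref{lemma_maxhook}-based preliminary (for the closing $\textsc{HandlePopping}$ call) unnecessary: if $\textsc{FindBeta}$ returns $0$ immediately, $\mathcal{S}_{j'}=\emptyset$ and the lemma is vacuous. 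Both your write-up and the paper's treat the first push in full and dismiss subsequent pushes by analogy; you are more upfront about what that inductive step still owes, but you have not discharged it any more than the paper has.
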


\begin{figure}[t!]
\begin{center}
\begin{tikzpicture}[yscale=0.6,xscale=.8, every node/.style={scale=0.8}]
\draw [fill, ColorD8] (0,0) rectangle (15,0);

\draw [black] (1,-0.1) -- (1, 0.1) node [above, midway] (q) {$\hook{j}{}$}; 
\draw [black] (13,-0.1) -- (13, 0.1) node [above, midway] (j) {$j$}; 

\foreach \x/\y/\z/\t/\c in {3/5.5/-1.5/z/solid}
{
  	\draw [thick, \c, |-|] (\x,\z) --  (\y,\z) node [below, midway] {$\t$} ; 
};

\foreach \x/\y/\z/\t/\n in {7/9.5/-1.5/z/ }
{
  	\draw [thick, dotted, |-|] (\x,\z) --  (\y,\z) node [below, midway] {$\t$} node [right] {\n} ; 
};

\foreach \x/\y/\t in {13/11.5/1, 11.5/11/2}	
{
  	\draw [thick, |-|] (\x,-0.5) --  (\y,-0.5) node [below, midway] {} ; 
};
\foreach \x/\y in {10/8.5, 7.5/7, 5.5/4.5, 3.5/3, 2/1}	
{
  	\draw [thick, |-|] (\x,-0.5) --  (\y,-0.5) node [below, midway] {} ;  
};

\foreach \x/\y in {8.5/7.5, 4.7/3.5}	
{
    \draw [black, thick, dashed] (\x,-0.5) -- (\y, -0.5) node [above, midway] {};
};
\foreach \x/\t in {7/j', 5.5/p, 4.5/k, 3/i}	
{
    \draw [black] (\x,-0.1) -- (\x, 0.1) node [above, midway] {$\t$};
};
\foreach \x/\t in {7.25/v, 5/v', 3.3/v}	
{
    \node at (\x,-0.9) {${\t}$} ;
};
\foreach \x in {10.5, 6.26, 2.5 }	
{
   \node at (\x,-0.5) {$\cdots$} ;
};

\draw [thick, ->>] (13,0) to [bend right]  (1,0) ;
\draw [thick, dotted, ->>] (7,0) to [bend right]  (5.5,0) ;
\end{tikzpicture}
\end{center}
\caption{The pair $(|z|,i)$ is the first to be pushed onto the stack of   $j'$. The factor $z$ is unbordered,  has $v$ as a proper prefix and some $v'$ as a proper suffix,  where both $v$ and $v'$ are unbordered prefixes  of $w[j\dd n]$  whose lengths $\ell$ and $\ell'$, respectively, satisfy $\ell<\ell'$.}
\label{fig:references}
\end{figure}
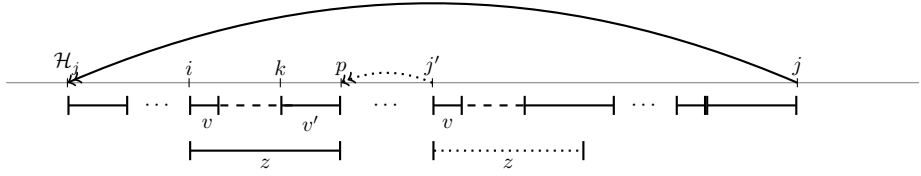

\begin{proof}
Let $p$ be the value of $\HOOK[j']$ prior to the execution of \textsc{FindHook}$(j')$. 
Since $j'\in \Twin_j^\ell$, the earlier call \textsc{FindHook}$(j)$ has set $\HOOK[j']=\hook{j'}{\ell}$.
As $j$ is the parent of $j'$, no further call has updated $\HOOK[j']$.
Thus, we conclude that $p = \hook{j'}{\ell}$.

Consequently, the first pair pushed onto the stack of $j'$ (cf.~Algorithm~\ref{algo:2}) is $(|z|,i)$,
where $z= w[i\dd p-1]$ is the shortest suffix of $w[1\dd p-1]$ which also occurs as a prefix of $w[j'\dd n]$ (see Figure~\ref{fig:references}).
Moreover, observe that  $|z|>\ell$ by the greedy construction of $\hook{j'}{\ell}$.

Recall that $j'\in \Twin_j^\ell$ implies that $w[j'\dd n]$ can be decomposed into unbordered prefixes of $w[j\dd n]$,
with the first prefix of length $\ell$, denoted $v = w[j'\dd j'+\ell-1]$.
With an occurrence at position $j'$, the factor $z$ also admits such a decomposition, still with the first prefix $v$ (due to $|z|>|v|$).
Additionally, note that $w[p\dd j'-1]$ can be decomposed into unbordered prefixes of $v$. 
Concatenating the decompositions of $z=w[i\dd p-1]$, $w[p\dd j'-1]$, and $w[j'\dd n]$, we conclude that $w[i\dd n]$
can be decomposed into unbordered prefixes of $w[j\dd n]$ with the first prefix (in this unique decomposition) equal to $v$.
Hence, $i\in \Stack_{j'}$ belongs to the same twin set as $j'$; i.e., it satisfies the first claim of the lemma.

Additionally, in the aforementioned decomposition of $w[i\dd n]$ consider the factor $v' = w[k\dd p-1]$ which ends at position $p-1$.
By the greedy construction of $\hook{j'}{\ell}$, its length $|v'|$ is strictly larger than $\ell$,
so $k\in \Twin^{\ell'}_{j}$ for $\ell'=|v'|>\ell$. 
Moreover, recall that $(|z|,i)=(k+\ell'-i,i)$ is pushed onto the stack of $j'$.
Consequently, $i$ also satisfies the second claim of the lemma.

A similar reasoning is valid for each $i$ that will appear in $\mathcal{S}_{j'}$. 
\end{proof}


\begin{lemma}
\label{lemma_children}
If $j$ is the parent of two references $j'' < j'$, both of which belong to $\mathcal{T}_{j}^{\ell}$, then $\mathcal{S}_{j'} \cap \mathcal{S}_{j''} = \emptyset$.
\end{lemma}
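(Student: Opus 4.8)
The plan is to argue by contradiction: suppose some position $i$ lies in both $\mathcal{S}_{j'}$ and $\mathcal{S}_{j''}$, where $j'' < j'$ and $j$ is the parent of both. First I would observe that being a parent is, by definition, the \emph{minimum} reference whose stack contains a given index; since $i \in \mathcal{S}_{j'} \cap \mathcal{S}_{j''}$ and $j'' < j'$, the parent of $i$ is at most $j''$, hence it is strictly smaller than $j'$. On the other hand, I want to show that the parent of $i$ must actually be $j'$ (or something $\ge j'$), yielding the contradiction. The route to this is to use Lemma~\ref{lemma_childtwinset}: since $i \in \mathcal{S}_{j'}$ and $j$ is the parent of $j'$ with $j' \in \mathcal{T}_j^\ell$, part~\ref{l1} of that lemma gives $i \in \mathcal{T}_j^\ell$. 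So $i$ was pushed onto the stack of $j$ paired with the \emph{same} length $\ell$ as $j'$ and $j''$; after the call $\textsc{FindHook}(j)$, we have $\HOOK[i] = \hook{i}{\ell}$.

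Next I would track how $\HOOK[i]$ and $\HOOK[j']$ evolve between the call on $j$ and the calls on $j''$ and $j'$. Processing goes right to left, so after $\textsc{FindHook}(j)$ the next relevant calls are on references strictly to the left of $j$; among $j''$ and $j'$, the one with larger index, namely $j'$, is processed first. The key point is that, up to the moment $\textsc{FindHook}(j')$ runs, no call has touched $\HOOK[j']$ or $\HOOK[i]$ since $\textsc{FindHook}(j)$ — indeed, a later call $\textsc{FindHook}(j^\star)$ with $j' < j^\star < j$ cannot have $j'$ or $i$ on its stack, because $j$ being the parent of $j'$ means $j = \min\{j^\star : j' \in \mathcal{S}_{j^\star}\}$, and similarly $i \in \mathcal{T}_j^\ell \subseteq \mathcal{S}_j$ forces $i$'s first appearance on a stack to be at $j$ (so by that minimality $i$ is not on the stack of any $j^\star$ with $j' \le j^\star < j$). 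Hence when $\textsc{FindHook}(j')$ begins we still have $\HOOK[j'] = \hook{j'}{\ell}$ and $\HOOK[i] = \hook{i}{\ell}$. Running $\textsc{FindHook}(j')$ pushes $i$ onto the stack of $j'$ paired with some length $\ell'' = |z| > \ell$ (by the argument in the proof of Lemma~\ref{lemma_childtwinset}, where $z = w[i\dd p-1]$ is the first suffix starting the chain from $p = \hook{j'}{\ell}$), and when this element is popped $\HOOK[i]$ is overwritten with $\hook{i}{\ell''}$ where $\ell'' > \ell$.

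Now I would derive the contradiction. When $\textsc{FindHook}(j'')$ later runs (it is to the left of $j'$, hence processed after $j'$), it starts its chain from $\HOOK[j''] = \hook{j''}{\ell}$ (still valid since $j$ is the parent of $j''$ and no intermediate call touched it). But by the same mechanism as in Lemma~\ref{lemma_childtwinset}, the factor starting the chain, and more generally every prefix of $w[j'\dd n]$ in the decomposition, has length strictly larger than $\ell$ at position $i$ --- however, for $i$ to be pushed onto $\mathcal{S}_{j''}$ paired with some length $\ell'''$, the value $\HOOK[i]$ in force at that point must be used, and the greedy construction forces $\ell''' > \ell$. So $i$ would be pushed onto \emph{two} stacks, $\mathcal{S}_{j'}$ and $\mathcal{S}_{j''}$, each time paired with a length exceeding $\ell$. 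The cleanest way to close this is to note that after $i$ is popped from $\mathcal{S}_{j'}$, $\HOOK[i] = \hook{i}{\ell''} \ge \hook{i}{}$ in the sense that it has already advanced past the point where a chain for $j''$ could reach it — more precisely, $\hook{i}{\ell''}$ equals the hook of $i$ restricted to prefixes of length $\le \ell''$, and any prefix of $w[j''\dd n]$ occurring at $i$ of length $> \ell$ that would be relevant to $j''$ is a prefix of $w[j'\dd n]$ of the same length (both decompositions of $w[i\dd n]$ start with the length-$\ell$ prefix $v$, and longer prefixes agree), so it was already consumed during $\textsc{FindHook}(j')$. Hence $\textsc{FindHook}(j'')$ never reaches $i$, i.e.\ $i \notin \mathcal{S}_{j''}$, contradicting our assumption.

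**Main obstacle.** The delicate step is the last one: making rigorous that the ``work'' the chain of $j'$ did at position $i$ (advancing $\HOOK[i]$ past $i_{p-1}$-type positions) genuinely prevents the chain of $j''$ from re-reaching $i$. This requires carefully comparing the two decompositions of $w[i\dd n]$ — one into unbordered prefixes of $w[j'\dd n]$ and one into unbordered prefixes of $w[j''\dd n]$ — and arguing they coincide up to the point where lengths exceed $\ell$, using that both $j'$ and $j''$ lie in $\mathcal{T}_j^\ell$ (so $w[j'\dd n]$ and $w[j''\dd n]$ share the length-$\ell$ unbordered prefix $v$ and, by heredity, all shorter unbordered prefixes). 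I expect the bookkeeping of which $\HOOK[\cdot]$ entries are valid at each call (leveraging the parent-as-minimum definition and the right-to-left processing order) to be the part most prone to subtle errors, so I would state those invariants explicitly as a preliminary claim before the main argument.
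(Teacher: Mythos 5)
Your plan takes a genuinely different route from the paper's, and it has a real gap. The paper assumes some $i \in \mathcal{S}_{j'} \cap \mathcal{S}_{j''}$, lets $z'$ and $z''$ be the factors cut at $i$ for $j'$ and $j''$ respectively, and notes that both begin with the length-$\ell$ prefix $v$ and end with some unbordered prefix $v',v''$ of $u = w[j\dd j+\textsf{LUF}[j]-1]$. A case analysis on $|z'|$ versus $|z''|$, conducted entirely in the common language of decompositions into unbordered prefixes of the \emph{parent's} word $u$, shows that $j''$ itself would have been pushed onto $\mathcal{S}_{j'}$ — which contradicts $j > j'$ being the parent of $j''$. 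You instead aim for the different conclusion that the chain inside $\textsc{FindHook}(j'')$ never lands on $i$, attributing this to $\textsc{FindHook}(j')$ having advanced $\HOOK[i]$. That mechanism is wrong: $\HOOK[i]$ only governs where the chain jumps \emph{after} $i$ has been pushed; whether $i$ is pushed at all depends on the $\HOOK[\cdot]$ values at positions $i' > i$ and on the outcome of $\textsc{FindBeta}$, neither of which is controlled by $\HOOK[i]$. So "advancing $\HOOK[i]$" does not by itself block $i$ from appearing in $\mathcal{S}_{j''}$.

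The more serious missing idea is the claim that "any prefix of $w[j''\dd n]$ occurring at $i$ of length $>\ell$ \dots is a prefix of $w[j'\dd n]$ of the same length (\dots longer prefixes agree)". Membership of $j'$ and $j''$ in $\mathcal{T}_j^{\ell}$ only guarantees that $w[j'\dd n]$ and $w[j''\dd n]$ share the length-$\ell$ prefix $v = w[j\dd j+\ell-1]$; they can diverge immediately afterwards, so their unbordered prefixes of length $>\ell$ need not coincide, and the two decompositions of $w[i\dd n]$ (one into prefixes of $w[j'\dd n]$, one into prefixes of $w[j''\dd n]$) need not agree past length $\ell$. This is precisely what the paper circumvents by comparing both decompositions against prefixes of $u$ rather than against each other. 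Finally, your opening observation — that the parent of $i$ is at most $j''$ yet should be $\ge j'$ — does not lead anywhere: $i$ need not be a reference and nothing in the setup forces its "parent" to be $\ge j'$; you rightly abandon that line, but it should be dropped rather than kept as a motivating remark.
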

\begin{proof}
Let $u =w[j\dd j+ \textup{\textsf{LUF}}[j] -1]$ and $v$ be the shortest unbordered prefix of $u$ cut at $j'$ and $j''$ (i.e., $|v| = \ell$).  Let $u' = w[j'\dd j'+ \textup{\textsf{LUF}}[j'] -1]$ and $u'' = w[j''\dd j''+ \textup{\textsf{LUF}}[j''] -1]$. Here, the current call to the \textsc{FindHook} function has been made on the reference $j''$. Consider a position $i$ such that  $i \in \mathcal{S}_{j'}$ and $i$ would also appear in $\mathcal{S}_{j''}$; let the corresponding prefixes of $u'$ and $u''$ cut at $i$ be $z'$ and $z''$  (examine Figure~\ref{fig:children}).  Observe that $i$ was in $\mathcal{T}_{j}^{\ell}$ (Lemma \ref{lemma_childtwinset}),  therefore, each of $z'$ and $z''$ has $v$ as a proper prefix. Let $v'$ and $v''$ be the corresponding proper suffixes of $z'$ and $z''$ where  $v'$ and $v''$ are unbordered prefixes of $u$; both of length greater than $|v|$.

\begin{figure}[!h]
\begin{center}
\begin{tikzpicture}[scale=0.6, every node/.style={scale=0.8}]
\draw [fill, ColorD8] (0,0) rectangle (18,0);
\foreach \x/\t in {0.5/q, 15/j, 17/}	
  	\draw [black] (\x,-0.1) -- (\x, 0.1) node [above, midway] (\t) {$\t$}; 

\foreach \x/\y/\z/\t/\c in {15/17/-0.5/u}
{
  	\draw [decorate,decoration={zigzag}, thick, |-|] (\x,\z) --  (\y,\z) node [below, midway] {$\t$} ; 
};
\foreach \x/\y/\z/\t/\c in {2.5/5/-1.5/z'/solid, 2.5/8.5/-2.1/z''/solid}
{
  	\draw [thick, \c, |-|] (\x,\z) --  (\y,\z) node [below, midway] {$\t$} ; 
};

\foreach \x/\y/\z/\t/\n in {10.5/17.5/-2.1/u''/, 13.5/17.5/-1.5/u'/}
{
  	\draw [thick, dotted, |-|] (\x,\z) --  (\y,\z) node [below, midway] {$\t$} node [right] {\n} ; 
};

\foreach \x/\y/\t in {10.5/11/1, 13.5/14/2}	
{
  	\draw [thick, |-|] (\x,-0.5) --  (\y,-0.5) node [below, midway] {} ; 
};
\foreach \x/\y in {8.5/7, 6.5/6, 5/4, 3/2.5, 1.5/0.5}	
{
  	\draw [thick, |-|] (\x,-0.5) --  (\y,-0.5) node [below, midway] {} ;  
};

\foreach \x/\y in {7/6.5, 4.2/3}	
{
    \draw [black, thick, dashed] (\x,-0.5) -- (\y, -0.5) node [above, midway] {};
};
\foreach \x/\t in {10.5/j'', 5/p, 8.5/k, 2.5/i, 13.5/j'}	
{
    \draw [black] (\x,-0.1) -- (\x, 0.1) node [above, midway] {$\t$};
};
\foreach \x/\t in {6.25/v, 4.5/v', 2.8/v, 7.75/v'', 13.75/v, 10.75/v}	
{
    \node at (\x,-0.9) {${\t}$} ;
};
\foreach \x in {12.25, 9.5, 5.5, 2 }	
{
   \node at (\x,-0.5) {$\cdots$} ;
};
\draw[decoration={brace,raise=5pt},decorate]
  (8.5,-0.8) -- node[below=10pt] {$x$} (6,-0.8);

\draw [thick, ->>] (15,0) to [bend right]  (0.5,0);

\end{tikzpicture}
\end{center}
\caption{The pair $(|z'|,i)$ and $(|z''|,i)$ are pushed onto the stack of $j'$ and $j''$, respectively, where $i$ is a position common to both $\mathcal{S}_{j'}$ and $\mathcal{S}_{j''}$. 
}
\label{fig:children}
\end{figure}
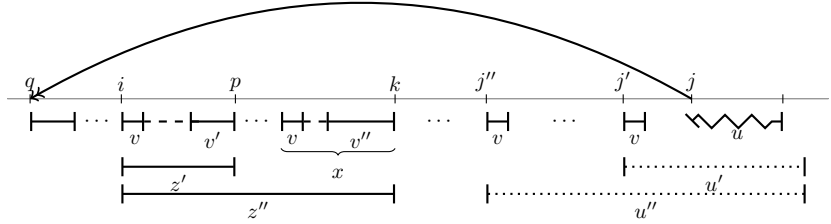

Since  $i\in \mathcal{S}_{j'}$, $w[i\dd j'-1]$   can be decomposed from right to left into unbordered prefixes of $u'$ such that each prefix (say $\tilde{v}$) of $u$ having length greater than $|v|$ that had been computed when $j$ was considered, is \textit{covered}; i.e., $\tilde{v}$ appears as either a proper suffix or a factor of some shortest prefix in such decomposition. In other words, the shortest prefix of $u'$ that ends with $\tilde{v}$ starts from the nearest $v$ preceding $\tilde{v}$ whose corresponding position was pushed in $\mathcal{S}_{j'}$. Note that the same condition is also valid for any prefix of $\tilde{v}$ that is longer than $v$.

\begin{enumerate}
\item \textbf{$|z'| < |z''|$: } The factor  $w[p\dd k-1]$ can be decomposed into unbordered prefixes of $u'$, where $p$ and $k$ are as in Figure~\ref{fig:children}. Let $x$ be the rightmost   prefix of such decomposition ($x$ has $v$ as proper prefix,   $v''$ as a proper suffix, and the corresponding position of this $v$; i.e., $i+|z''|-|x|$ is in $\mathcal{S}_{j'}$). Moreover, $|z'| < |x|$ otherwise  $z''$ cannot be  unbordered. Observe the two  equal-length factors $w[i\dd i+|z''|]$  and $w[j''\dd j''+|z''|]$. It follows from $i+|z''|-|x| \in \mathcal{S}_{j'}$ that
$j''+|z''|-|x| \in \mathcal{S}_{j'}$. Consequently, $w[i\dd i+|z''|-|x|]$  and $w[j''\dd j''+|z''|-|x|]$ have the same right to left decomposition in prefixes of $u'$ implying that if if $i\in \mathcal{S}_{j'}$ then $j''$ should have been in $\mathcal{S}_{j'}$ which is a contradiction.

\item \textbf{$|z'| \geq |z''|$: }  This implies that  $z''$ is  an unbordered prefix of $u'$.  Therefore,  if $i$ was pushed onto the stack of $j'$ then $j''$ should also had been pushed onto its stack  which is a contradiction.\qedhere
\end{enumerate}
\end{proof}

\subsection{Finding Shortest Border (\textsc{FindBeta} Function)}
Given a reference $j$  and a position $q$, function \textsc{FindBeta} returns the length $\beta$ of the shortest prefix of $w[j\dd n]$ that is a suffix  of $w[1\dd q-1]$,
or $\beta = 0$ if there is no such prefix; note that the sought shortest prefix is necessarily unbordered.

To find this length, we use `prefix-suffix queries' of \cite{DBLP:conf/soda/KociumakaRRW15,10.1007/978-3-642-34109-0_30}.
Such a query, given a positive integer $d$ and two factors $x$ and $y$ of $w$, reports all prefixes of $x$ of length between $d$ and $2d$ that occur as suffixes of $y$. 
The lengths of sought prefixes are represented as an arithmetic progression, which makes it trivial to extract the smallest one.
A single prefix-suffix query can be implemented in $\Oh(1)$ time after randomized preprocessing of $w$ which takes $\Oh(n)$ time in expectation~\cite{DBLP:conf/soda/KociumakaRRW15},
or $\Oh(n \log n)$ time with high probability~\cite{10.1007/978-3-642-34109-0_30}. Additionally, replacing hash tables with deterministic dictionaries~\cite{DBLP:conf/icalp/Ruzic08},
yields an $\Oh(n\log n \log^2 \log n)$-time deterministic preprocessing.

To implement \textsc{FindBeta}, we set $x=[j\dd n]$, $y=[1\dd q-1]$ and we ask prefix-suffix queries for subsequent values $d=1,3,\ldots,2^{k}-1,\ldots$
until $d$ exceeds $\min(|x|,|y|)$. Note that we can terminate the search as soon as a query reports a non-empty answer.
Hence, the running time is $\Oh(1+\log \beta)$ if the query is successful (i.e., $\beta \ne 0$) and $\Oh(\log n)$ otherwise.

Furthermore, we can expedite the successful calls to \textsc{FindBeta} if we already know that $\beta\notin\{1,\ldots,\ell\}$.
In this case, 
we can start the search with $d = \ell+1$. 
Specifically, if $j$ is not a base reference and belongs to $\mathcal{T}_{j'}^{\ell}$ for some $j'$, we can start from $d = 2\ell+1$ because  Lemma~\ref{lemma_childtwinset}.2 guarantees that $\beta > \ell+\ell'\ge 2\ell$.

%
%
\newcommand{\FindBeta}{\ensuremath{\textsc{FindBeta}}\xspace}
\newcommand{\FindHook}{\ensuremath{\textsc{FindHook}}\xspace}

\section{Analysis}
\label{sec:analysis}
Algorithm~\ref{algo:1}  computes the longest unbordered factor at  each position $i$; position $i$ is a start-reference or it refers to some other position. The correctness of the computed  $\textsf{LUF}[i]$ follows directly from Lemmas~\ref{start} through~\ref{large}. 

The analysis of the algorithm running time  necessitates probing of the total time consumed by \textsc{FindHook} and the time spent by \textsc{FindBeta} function which, in turn, can be measured in terms of the total  size of the stacks of various references.  

\begin{lemma} \label{lemma_totalsize}
The total size of all the stacks used throughout the algorithm is $\Oh(n \log n)$. 
Moreover, the total running time of the \FindBeta function is $\Oh(n\log n)$.
\end{lemma}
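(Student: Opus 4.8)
The plan is to bound the total size $\sum_j |\mathcal{S}_j|$ by charging positions to twin sets and exploiting the doubling phenomenon hinted at in the footnote after Lemma~\ref{lemma_childtwinset}. The key structural fact I would extract from Lemmas~\ref{lemma_maxhook}--\ref{lemma_children} is that the references organize themselves into a forest: each non-base reference $j'$ has a unique parent $j=\min\{j : j'\in\mathcal{S}_j\}$, and if $j$ is the parent of two distinct references $j''<j'$ lying in the same twin set $\mathcal{T}_j^\ell$, then $\mathcal{S}_{j'}\cap\mathcal{S}_{j''}=\emptyset$ (Lemma~\ref{lemma_children}). Combined with Lemma~\ref{lemma_childtwinset}, which says that every position $i\in\mathcal{S}_{j'}$ already belonged to $\mathcal{T}_j^\ell$ (the twin set of the parent at the parent's length), this means the stacks of the children of $j$ that sit in $\mathcal{T}_j^\ell$ are \emph{pairwise disjoint subsets} of $\mathcal{T}_j^\ell$ itself. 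Hence, if we look at one fixed position $i$, the sequence of (reference, length) pairs $(j_0,\ell_0),(j_1,\ell_1),\dots$ for which $i$ is pushed with length $\ell_t$ onto $\mathcal{S}_{j_t}$ is such that $j_0>j_1>\cdots$ (references processed right to left) and, crucially, $\ell_0<\ell_1<\cdots$ — in fact Lemma~\ref{lemma_childtwinset}.2 forces $\ell_{t+1}>\ell_t+\ell'\ge 2\ell_t$, i.e. the length at least doubles each time $i$ reappears.

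With this doubling in hand the counting is immediate. The first time a position $i$ is pushed onto any stack it is paired with some length $\ell\ge 1$, and each subsequent push at $i$ at least doubles this length; since every length is bounded by $n$, position $i$ is pushed at most $1+\log_2 n = \Oh(\log n)$ times in total across all references. Summing over the $n$ positions gives $\sum_j |\mathcal{S}_j| = \Oh(n\log n)$. (One must also check that a position $i$ appears at most once in a \emph{single} stack $\mathcal{S}_j$, which follows because within one call of \textsc{FindHook}$(j)$ the chain of cut positions $i_1>i_2>\cdots$ is strictly decreasing, so no position is revisited; this is exactly the observation behind the greedy construction in Observation~\ref{obs:greedy}.) I should be careful to justify the "at least doubles" claim cleanly: when $i$ is pushed onto $\mathcal{S}_{j'}$ with length $\ell'' = |z|$, Lemma~\ref{lemma_childtwinset}.2 exhibits $k\in\mathcal{T}_j^{\ell'}$ with $\ell'>\ell$ and $z$ having $v$ (length $\ell$) as a proper prefix and $v'$ (length $\ell'$) as a proper suffix, two occurrences that cannot overlap since $z$ is unbordered, whence $|z|\ge \ell+\ell' > 2\ell$, where $\ell$ was the length paired with $i$ at the parent $j$ — and $\ell$ is precisely the length of the previous push at $i$ in the chain of ancestors.

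For the second assertion, I would observe that each call to \textsc{FindBeta} is tied to one element about to be pushed onto a stack (the call in the \textsc{while} loop of Algorithm~\ref{algo:2}), plus one final unsuccessful call per invocation of \textsc{FindHook}. The $\Oh(\log n)$ unsuccessful terminal calls number one per reference, hence $\Oh(n\log n)$ total cost. For the successful calls, recall the running time of \textsc{FindBeta} starting from $d=\ell+1$ is $\Oh(1+\log(\beta/\ell))$ when $\beta$ is the reported length; using the expedited start $d=2\ell+1$ for non-base references (valid by Lemma~\ref{lemma_childtwinset}.2) and $d=1$ for base references, and again the doubling structure $\beta\ge 2\ell$ along the chain at a fixed position, the costs $\log(\beta_{t+1}/\beta_t)$ telescope: summing over the $\Oh(\log n)$ pushes at position $i$ gives $\Oh(\log n + \log(\beta_{\max}/\beta_{\min})) = \Oh(\log n)$ per position, hence $\Oh(n\log n)$ overall. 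The main obstacle, I expect, is making the "chain at a fixed position" argument fully rigorous — namely verifying that the lengths paired with a fixed $i$ at successive references really do form the doubling sequence claimed, which requires carefully tracking that the length $\ell$ appearing in Lemma~\ref{lemma_childtwinset} for the pair (parent $j$, child $j'$) coincides with the length with which $i$ was most recently pushed before the call \textsc{FindHook}$(j')$, and that $\HOOK[i]$ has not been overwritten in between — both of which follow from the parent being defined as the \emph{most recent} stack containing $j'$, but the bookkeeping needs to be spelled out.
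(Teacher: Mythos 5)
Your proposal follows essentially the same line as the paper's proof: bound the total stack size by showing each position is pushed onto $\Oh(\log n)$ stacks, using Lemma~\ref{lemma_childtwinset}.2 to obtain the at-least-doubling of the length paired with a fixed position across the chain of parent references, and then bound \FindBeta by combining the cost of one unsuccessful terminal call per \FindHook invocation with a telescoping sum over the $\Oh(\log(\ell_i/\ell_{i-1}))$ costs of the successful calls. You correctly identify the crux — that the successive pushes at a fixed position must trace out a parent chain so that Lemma~\ref{lemma_childtwinset}.2 applies between consecutive pushes — and you flag the bookkeeping needed to make that rigorous.

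The one ingredient you do not mention that the paper's argument actually relies on is that the stacks of distinct \emph{base} references are disjoint (Lemma~\ref{disjoined} in Appendix~\ref{A:lemma_base_disjoint}). Lemma~\ref{lemma_childtwinset}.1 gives containment of a child's stack in its parent's, and Lemma~\ref{lemma_children} gives disjointness of siblings within a twin set; but without disjointness at the roots, a single position could lie in two incomparable chains and the ``at-most-one parent chain'' structure would not follow, so the doubling argument at a fixed position could not be applied cleanly. This is precisely the ``bookkeeping'' you were worried about; it is not automatic from the definition of parent, and it is settled in the paper by a separate lemma. With that addition your argument matches the paper's.
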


\begin{proof}%
First, we shall prove that any position $p$ belongs to $\Oh(\log n)$ stacks.

By Lemma~\ref{lemma_childtwinset}.1, the stack of any reference is a subset of the stack of its parent.
Moreover, by Lemmas~\ref{lemma_childtwinset}.1 and~\ref{lemma_children}, the stacks of references sharing the same parent are disjoint.
A similar argument (see Lemma~\ref{disjoined} in Appendix \ref{A:lemma_base_disjoint}) shows that the stacks of base references are disjoint.

Consequently, the references $j_1>\ldots>j_s$ whose stacks $\mathcal{S}_{j_i}$ contain $p$ form a chain with respect to the parent relation:
$j_1$ is a base reference, and the parent of any subsequent $j_i$ is $j_{i-1}$.
Let us define $\ell_1,\ldots,\ell_s$ so that $p \in \mathcal{T}_{j_i}^{\ell_i}$. 
By Lemma~\ref{lemma_childtwinset}.2, for each $1\le i < s$, there exist $k_i$ and $\ell'_i>\ell_i$ such that
$k_i \in \mathcal{T}_{j_i}^{\ell'_i}$ and $\ell_{i+1} = k_i-p +\ell'_i \ge \ell_i + \ell'_i > 2\ell_i$.
Due to $1\le \ell_i \le n$, this yields $s \le 1+\log n = \Oh(\log n)$, as claimed.

Next, let us analyse the successful calls $\beta = \FindBeta(q,j)$ with $p=q-\beta$.
Observe that after each such call, $p$ is inserted to the stack $\mathcal{S}_j$ and to the twin set $\mathcal{T}_{j}^\beta$,
i.e, $j=j_i$ and $\beta = \ell_i$ for some $1\le i \le s$.
Moreover, if $i>0$, then $j_{i} \in \mathcal{T}_{j_{i-1}}^{\ell_{i-1}}$,
which we are aware of while calling  \FindBeta.
Hence, we can make use of the fact that $\ell_i \notin \{1,\ldots,2\ell_{i-1}\}$ to find $\beta = \ell_i$
in time $\Oh(\log \frac{\ell_{i}}{\ell_{i-1}})$. For $i=1$, the running time is $\Oh(1+\log \ell_1)$.
Hence, the overall running time of successful queries $\beta = \FindBeta(q,j)$ with $p=q-\beta$
is $\Oh(1+\log \ell_1 + \sum_{i=2}^s \log \frac{\ell_{i}}{\ell_{i-1}})=\Oh(1+\log \ell_s)= \Oh(\log n)$,
which sums up to $\Oh(n\log n)$ across all positions $p$. 

As far as the unsuccessful calls $0=\FindBeta(q,j)$ are concerned,
we observe that each such call terminates the enclosing execution of \FindHook.
Hence, the number of such calls is bounded by $n$ and their overall running time is clearly $\Oh(n \log n)$.
\end{proof}

\begin{theorem} \label{theorem_final}
Given a word $w$ of length $n$, Algorithm~\ref{algo:1} solves the \textsc{Longest Unbordered Factor Array} problem in $\Oh(n\log n)$ time
with high probability. It can also be implemented deterministically in $\Oh(n \log n \log^2 \log n)$ time. 
\end{theorem}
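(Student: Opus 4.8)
The plan is to assemble the theorem from the pieces already in place: correctness comes from the case analysis of Lemmas~\ref{start}--\ref{large}, and the running time comes from Lemma~\ref{lemma_totalsize} together with a bookkeeping argument about the preprocessing and the non-\textsc{FindBeta} work inside \textsc{FindHook}. First I would recall the overall structure of Algorithm~\ref{algo:1}: positions are processed right to left, and for each position $i$ the value $\textsf{LUF}[i]$ is set according to one of the four cases (the $\textsf{LSF}_\ell[i]=0$ case, the $\textsf{LSF}_\ell[i]<\textsf{LUF}[j]$ case, and the two sub-cases of $\textsf{LSF}_\ell[i]\ge\textsf{LUF}[j]$ depending on whether $i\ge\textsf{HOOK}[j]$). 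Since positions are processed in decreasing order, whenever $i$ refers to $j$ we have $j>i$, so $\textsf{LUF}[j]$ (and, when $j$ is a reference, $\textsf{HOOK}[j]=\hook{j}{}$ via \textsc{FindHook}) has already been computed; hence every value the algorithm reads is well-defined when it is needed. Correctness of each branch is then exactly the content of Lemma~\ref{start} (first case), Lemma~\ref{small} (case 2a), and Lemma~\ref{large} (cases 2b and 2c), so $\textsf{LUF}[i]$ is correctly computed at every position, and the array returned solves the \textsc{Longest Unbordered Factor Array} problem.

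Next I would bound the running time. The preprocessing phase computes $\textsf{LSF}_\ell$ and $\textsf{LSF}_r$ (and the $\textsf{IsReference}$ array) in $\Oh(n)$ time for integer alphabets via the modified longest-previous-factor algorithm; the reduction to an integer alphabet by replacing letters with ranks costs $\Oh(n\log n)$ as noted in the preliminaries. The preprocessing for prefix-suffix queries costs $\Oh(n)$ expected, or $\Oh(n\log n)$ with high probability, or $\Oh(n\log n\log^2\log n)$ deterministically. For the main phase, the per-position constant-time work in the four cases contributes $\Oh(n)$ overall; the only non-trivial costs are the calls to \textsc{FindHook} over all references. Inside a single call \textsc{FindHook}$(j)$, the work splits into: (i) the \textsc{FindBeta} calls, whose total cost across the whole algorithm is $\Oh(n\log n)$ by Lemma~\ref{lemma_totalsize}; (ii) the \textsc{Push}/\textsc{Pop}/\textsc{HandlePopping} stack operations, which are $\Oh(1)$ amortized per element and hence bounded by the total size of all stacks, again $\Oh(n\log n)$ by Lemma~\ref{lemma_totalsize}; and (iii) reads and writes to the $\textsf{HOOK}$ table, which are $\Oh(1)$ each and again charged to the stack elements. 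Summing, the main phase runs in $\Oh(n\log n)$ time.

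Combining the two phases: with high probability the total is $\Oh(n\log n)$, dominated by the preprocessing for prefix-suffix queries and by Lemma~\ref{lemma_totalsize}; with the deterministic dictionary of~\cite{DBLP:conf/icalp/Ruzic08} in place of hashing, the preprocessing becomes $\Oh(n\log n\log^2\log n)$ while everything else stays $\Oh(n\log n)$, giving the deterministic bound. I expect the genuine content to have been discharged already in Lemma~\ref{lemma_totalsize} (the $\Oh(\log n)$ bound on the number of stacks containing a fixed position, via the geometric growth $\ell_{i+1}>2\ell_i$ coming from Lemma~\ref{lemma_childtwinset}), so the main obstacle here is merely the careful accounting that every operation performed by \textsc{FindHook} outside of \textsc{FindBeta} can be amortized against a stack insertion — in particular, verifying that no work is done per call of \textsc{FindHook} that is not tied either to pushing some element or to the single unsuccessful \textsc{FindBeta} that terminates it. This is routine once one observes that the body of the \texttt{while} loop in Algorithm~\ref{algo:2} executes exactly once per pushed element, and that \textsc{HandlePopping} does $\Oh(1)$ amortized work per popped (hence previously pushed) element.
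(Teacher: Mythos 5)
Your proposal is correct and follows essentially the same route as the paper: correctness is discharged via Lemmas~\ref{start}--\ref{large} and right-to-left processing order, while the running time is split into $\Oh(n)$ preprocessing, $\Oh(n\log n)$ (or $\Oh(n\log n\log^2\log n)$ deterministic) prefix-suffix preprocessing, and $\Oh(n\log n)$ for \textsc{FindHook}/\textsc{FindBeta} via Lemma~\ref{lemma_totalsize}. Your accounting is slightly more explicit than the paper's (notably charging \textsc{HandlePopping} and \textsf{HOOK} updates amortized to stack insertions, and noting the $\Oh(n\log n)$ alphabet-rank reduction), but these are refinements of, not departures from, the same argument.
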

\begin{proof}
Assuming an integer alphabet, the  computation of $\textsf{LSF}_{\ell}$ and $\textsf{LSF}_{r}$ arrays  along with the  constant time per position initialisation of the other  arrays sum up  the preprocessing stage  (Lines 2--7) to  $\mathcal{O}(n)$ time. The running time required for the assignment of the \textsf{luf} for all  positions (Lines 9--18) is  $\mathcal{O}(n)$. The time spent in construction of the data structure to answer prefix-suffix queries used in  \textsc{FindBeta} function is $\mathcal{O}(n \log n)$ with high probability or $\Oh(n \log n \log^2 \log n)$ deterministic.

Additionally, the total running time  of  the \textsc{FindHook} function  for all the references, being proportional to the aggregate size of all the stacks, can be deduced from Lemma~\ref{lemma_totalsize}. This has been shown to be   $\mathcal{O}(n\log n)$ in the worst case, same as the total running time of \FindBeta. The claimed bound on the overall running time follows.
\end{proof}
To show that the upper bound shown in Lemma~\ref{lemma_totalsize} 
in the worst case is tight, we design an infinite family of words that exhibit the worst-case behaviour (see Appendix~\ref{Appendix_Worst} for more details). 
  
%

\section{Final Remark}
\label{sec:conclusion}

Computing the longest unbordered factor in $o(n\log n)$ time for integer alphabets remains an open question.



\bibliographystyle{plain}
\bibliography{unbordered}
\newpage
\appendix
\section{Appendix}
\subsection{Pseudo-code} \label{Appendix_Pseudo_Main}
\alglanguage{pseudocode}
\begin{algorithm}[H]
\footnotesize
\caption{Compute the Longest Unbordered Factor Array}
\label{Algorithm:overabundant}
\begin{algorithmic}[1]
\Procedure{$\textsc{LongestUnborderedFactor}$}{word $w$, integer $n=|w|$}
\medskip
 \Statex \hskip0.3em $\triangleright$ {\bf Preprocessing:}
    \State $\textsf{LSF}_{\ell}, \textsf{LSF}_r \leftarrow$ \textsc{LongestSuccessorFactor}($w$)
    \For{$i \leftarrow 1 $ {\bf to} $n$  }
		\If{$\textsf{LSF}_{\ell}[i] \neq 0$}		
            \State $\textsf{IsReference}[\textsf{LSF}_r[i]]\leftarrow true$
		\EndIf   	
    \EndFor
    \State $\textsf{HOOK}[1\dd n] \leftarrow 1, \cdots, n$

 \medskip
 
\Statex \hskip0.3em $\triangleright$ {\bf Main Algorithm:}
     \For{$i\leftarrow n$ to 1 }
     	\If{$\textsf{LSF}_{\ell}[i] =0$} \Comment{\mbox{\scriptsize{starting reference}}}		
            \State $\textsf{LUF}[i] \leftarrow n-i+1$
		\Else 
     		\State $j\leftarrow \textsf{LSF}_{r}[i]$
     		\If{$\textsf{LSF}_{\ell}[i] <  \textsf{LUF}[j]$} 
						\State   $\textsf{LUF}[i] \leftarrow  j+ \textsf{LUF}[j] -i$
			\ElsIf{$i \geq \textsf{HOOK}[j]$}
						\State $\textsf{LUF}[i] \leftarrow \textsf{LUF}[j]$ \label{update}
			\Else \State $\textsf{LUF}[i]\leftarrow  \textsf{HOOK}[j]-i$
			\EndIf	
     
            \If{\color{black}\textsc{IsPotentialReference}($i$)   \color{black}} \footnotemark  
                \State $\textsf{HOOK}[{i}]	\leftarrow$ \textsc{FindHook}($i$)
         \EndIf
   \EndIf
  \EndFor
\EndProcedure
\Statex
\end{algorithmic}

  \label{algo:1}
\end{algorithm}
\footnotetext{ \color{black}\textsc{IsPotentialReference}($i$)  returns {\it true}  if there exists $i'$ such that $\textsf{LSF}_{r}[i'] =i$   and $\textsf{LSF}_{\ell}[i'] \geq \textsf{LUF}[i]$.
\color{black}}

A \texttt{C++} implementation of our algorithm can also be made available upon request.

\subsection{Words Exhibiting Worst-Case Behaviour}\label{Appendix_Worst}
A word can be made to exhibit the worst-case behaviour if we force the maximum number of  positions to be pushed onto $\Theta(\log n)$  stacks. This can be achieved as follows.
\begin{enumerate}
\item Maximise the number of references: Every position in each twin set $\mathcal{T}_{j}^{l}$ is a reference.
\item Maximise the size of each stack: The largest position (reference) in any twin set pushes the rest of the positions onto its stack. If $j'$ is largest reference in $\mathcal{T}_{j}^{\ell}$ then $\mathcal{S}_{j'} = \mathcal{T}_{j}^{\ell} - \{j'\}$.
\item Maximise the number of twin sets obtained for a stack: This increases the number of unbordered prefixes that can be cut at some position $i$, therefore, increasing the number of re-pushes of $i$. This can be achieved by keeping   $|\mathcal{T}_{j}^{\ell}| = 2|\mathcal{T}_{j}^{\ell+1}|+1$.
\end{enumerate}
Using the above, Algorithm~\ref{algo:3} creates a word $w$ over $\Sigma= \{
\texttt{a,b}\}$, such that  the total size of the stack of the  base reference $j$ ($w[j] = \texttt{a}$) and the references  that appear in $\mathcal{S}_{j}$ is $t|\mathcal{S}_{j}|-t$, where $t$ is the number of non-empty twin sets obtained for the stack of $j$.

Assume  a binary alphabet $\Sigma = \{\texttt{a}, \texttt{b}\}$, the following  words exhibit the maximum total  size of the stacks used: $w_3 = (\texttt{aabaabb})^2, t_{max}= 3$; $w_4 = (\texttt{aabaabbaabaabbb})^2, t_{max}=4$; $w_5 = (\texttt{aabaabbaabaabbbaabaabbaabaabbbb})^2, t_{max}=5$; etc., where $t_{max}$ is the maximum number of stacks onto which some proportional number of elements has been pushed by  Algorithm~\ref{algo:1}.   Position 1 in $w_4$, for example, is pushed onto four  stacks paired with length 1, 3, 7 then 15. The total size of the stacks used by each word from this family of words is thus $\Theta(n \log n)$. Figure~\ref{figure} (in Appendix~\ref{Appendix_Worst}) shows the relation between the lengths of the words and the total size of the stacks used by Algorithm \ref{algo:1} for the specified family of the words.
\begin{algorithm}
\alglanguage{pseudocode}
\caption{Create Word $w$ Over $\Sigma= \{
\texttt{a,b}\}$}
\footnotesize
\begin{algorithmic}[1]
\State $w \leftarrow ``"$
\State $\texttt{block} \leftarrow ``\texttt{a}"$
\State $i \leftarrow 1$
\While{$i \leq  t-1$}
\State $w\leftarrow w + \texttt{block} + w$
\State $\texttt{block} \leftarrow \texttt{block} + ``\texttt{b}"$
\State $i \leftarrow i+1$
\EndWhile
\State $w \leftarrow w + \texttt{block}$
\State $w\leftarrow w+ w$
\end{algorithmic}
\label{algo:3}
\end{algorithm}
\begin{figure}[!h]
\centering
\begin{tikzpicture}[scale=1]
\begin{axis}[
  width = 0.9\linewidth,
  height= 6cm,
  xlabel= $n / 10^5$,
  axis y line*=left,
  ylabel=$t_{max}$]
\addplot table [y=$depth-max$,   x expr=\thisrow{n}/100000   ]{data.dat};  \label{t}
\end{axis}

\begin{axis}[
  width = 0.9\linewidth,  
  height= 6cm,
  axis y line*=right,
  axis x line=none,
  ylabel=total size of stacks $/10^6$,
  legend pos=south east
]
\addlegendimage{/pgfplots/refstyle=t}\addlegendentry{$t_{max}$}
\addplot[smooth,mark=x,red] table [y expr =\thisrow{$np$}/1000000, x=n]{data.dat};

\addlegendentry{total size of stacks}
\end{axis}
\end{tikzpicture}
\caption{Plot showing the maximum number of stacks onto which some element has been pushed ($t_{max}$) and the total size of stacks  used by Algorithm~\ref{algo:1} for  specially designed  words.} 

\label{figure}
\end{figure}
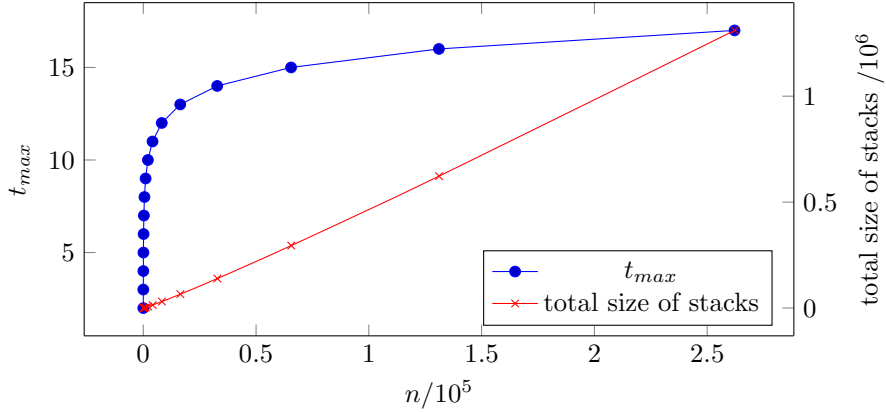

\subsection{Stacks of the Base References are Disjoint}\label{A:lemma_base_disjoint}
\begin{lemma}\label{disjoined}
If $j_1\ne j_2$ are base references, then $\mathcal{S}_{j_1}\cap \mathcal{S}_{j_2}=\emptyset$.
\end{lemma}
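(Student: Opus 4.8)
The plan is to mirror the proof of Lemma~\ref{lemma_children}, replacing the hypothesis ``$j''$ and $j'$ share the parent $j$'' by ``$j_1$ and $j_2$ are base references''. A stack $\mathcal{S}_j$ only ever contains positions smaller than $j$, so $\mathcal{S}_{j_1}\cap\mathcal{S}_{j_2}\ne\emptyset$ forces (after renaming) $j_2<j_1$; assume for contradiction that some $i\in\mathcal{S}_{j_1}\cap\mathcal{S}_{j_2}$, so $i<j_2<j_1$. Since $j_2$ is a base reference it belongs to no stack, so it suffices to show that $j_2\in\mathcal{S}_{j_1}$: this would give $j_2$ a parent (namely some reference $\le j_1$), contradicting base-ness. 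The base-reference hypothesis will also be used to guarantee $\HOOK[j_1]=j_1$ and $\HOOK[j_2]=j_2$ at the start of the respective calls of \textsc{FindHook}, so that no recycled jump happens at the two endpoints.

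Next I would set up the common anchor that plays the role of the length-$\ell$ prefix $v$ in Lemma~\ref{lemma_children}. Let $u^{(1)}=w[j_1\dd j_1+\LUF[j_1]-1]$, $u^{(2)}=w[j_2\dd j_2+\LUF[j_2]-1]$, and let $z_1=w[i\dd i+\beta_1-1]$ and $z_2=w[i\dd i+\beta_2-1]$ be the unbordered prefixes of $u^{(1)}$ and $u^{(2)}$ that are cut at position $i$ during $\textsc{FindHook}(j_1)$ and $\textsc{FindHook}(j_2)$; these exist precisely because $i$ lies in both stacks. Both $z_1$ and $z_2$ start at $i$, so one is a prefix of the other, and the shorter one $v$ is an unbordered prefix of $u^{(1)}$ and of $u^{(2)}$ simultaneously. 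As in Lemma~\ref{lemma_childtwinset}, $i\in\mathcal{S}_{j_1}$ gives a decomposition of $w[i\dd j_1-1]$ into unbordered prefixes of $u^{(1)}$ (and similarly for $j_2$), and these decompositions can be taken to begin with $v$. I would then split into the two cases $\beta_1\ge\beta_2$ and $\beta_1<\beta_2$, exactly paralleling the case split of Lemma~\ref{lemma_children}: if $\beta_1\ge\beta_2$ then $z_2$ is itself an unbordered prefix of $u^{(1)}$, and uniqueness of the decomposition (Observation~\ref{obs:greedy}) forces the greedy right-to-left scan of $\textsc{FindHook}(j_1)$ to cut a prefix starting at $j_2$; if $\beta_1<\beta_2$ one argues, as in the subcase $|z'|<|z''|$ there, that the equal-length windows $w[i\dd\cdot]$ and $w[j_2\dd\cdot]$ have identical right-to-left decompositions into prefixes of $u^{(1)}$, again placing $j_2$ at a cut point.

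The step I expect to be the main obstacle is reconciling these ``$j_2$ must be cut'' conclusions with the $\HOOK$-recycling optimisation inside Algorithm~\ref{algo:2}: $\textsc{FindHook}(j_1)$ does not literally visit every cut point of the greedy decomposition, because after cutting a prefix ending at $p-1$ it jumps from $p$ to $\HOOK[p]$, and a region skipped this way might contain $j_2$. To close this, I would use the structural fact from the analysis preceding Algorithm~\ref{algo:2} (and used implicitly in Lemma~\ref{lemma_childtwinset}): a region skipped during $\textsc{FindHook}(j_1)$ decomposes into unbordered prefixes of the just-cut prefix $z$, all of length strictly below $|z|$; since the common prefix $v$ is cut at $i$ and every such sub-decomposition must have $v$ appear as a boundary between two of its factors, $j_2$ cannot be hidden strictly inside a skipped region, so it is an actual cut point of $\textsc{FindHook}(j_1)$, i.e.\ $j_2\in\mathcal{S}_{j_1}$ — the desired contradiction. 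Finally I would record the companion fact that, by the same reasoning applied with the empty ``parent'', the stacks of the base references that contain a fixed position form the initial segment of a parent-chain, which is what Lemma~\ref{lemma_totalsize} actually invokes.
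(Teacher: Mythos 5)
Your plan takes a genuinely different route from the paper, and unfortunately it has real gaps that I don't see how to close along the lines you sketch.

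The paper does not try to show $j_2\in\mathcal{S}_{j_1}$ directly. Instead it runs a two-stage argument: (i) a \emph{thought experiment} in which \textsc{FindHook} is called on \emph{every} position, so that one can define \emph{base positions} (positions never pushed onto any stack) without worrying about which positions are potential references; in that world it picks the largest $i\in\mathcal{S}_{j_1}\cap\mathcal{S}_{j_2}$, deduces $\ell_1\ne\ell_2$ from maximality of $i$, decomposes the longer prefix $w[i\dd i+\ell_2-1]$ into unbordered prefixes of $u^{(1)}$, isolates the first factor $v_s$ with $|v_s|>\ell_1$, and shows $j_1$ itself must lie in the stack of some base position $k'$ with $k'>j_1$; and (ii) a separate argument showing that every base \emph{reference} of the actual algorithm is a base \emph{position} of the thought experiment, which requires proving that $u=w[j\dd j+\LUF[j]-1]$ has no other occurrence in $w$. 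Your plan addresses neither stage.

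Concretely, two of your steps do not go through. First, the sentence ``these decompositions can be taken to begin with $v$'' is false as stated: the decomposition of $w[i\dd j_1-1]$ into unbordered prefixes of $u^{(1)}$ necessarily has first factor $z_1$ (the prefix actually pushed at $i$), not the shorter anchor $v$, and likewise the $u^{(2)}$-decomposition begins with $z_2$. In Lemma~\ref{lemma_children} the common first factor $v$ of length $\ell$ is supplied by the hypothesis $j',j'',i\in\mathcal{T}_{j}^{\ell}$; base references have no shared parent and hence no such twin-set anchor, so the ``covered'' bookkeeping that drives the case $|z'|<|z''|$ of Lemma~\ref{lemma_children} has nothing to hook onto here. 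Second, the way you propose to rule out $j_2$ being hidden inside a skipped region is not an argument: a region skipped via $\HOOK[\cdot]$ is local to wherever the jump occurs and bears no relation to the anchor $v$ cut at $i$, so ``$v$ must appear as a boundary'' in it is unjustified. Showing that $j_2$ is an \emph{actual} cut point of \textsc{FindHook}$(j_1)$, rather than a boundary the recycling jumps over, is precisely the difficulty the paper sidesteps by instead exhibiting a position $k>j_1$ whose (thought-experiment) stack must contain $j_1$, and then transferring from thought experiment to algorithm via the ``base reference $\Rightarrow$ base position'' lemma. If you want to keep your framing, you would at minimum need a substitute for the twin-set anchor and a rigorous statement of an invariant guaranteeing when a boundary of the greedy decomposition is actually pushed; as written the plan does not establish $j_2\in\mathcal{S}_{j_1}$.
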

\renewcommand{\S}{\mathcal{S}}

\begin{proof}
 Suppose that the \textsc{FindHook} function is called for each position in $w$. We define a  \emph{base position} analogously as  a  position that does not appear in any stack. 
 For a proof by contradiction, let $i$ be the largest element of $\mathcal{S}_{j_1}\cap \mathcal{S}_{j_2}$,
 with $(\ell_1,i)$ and  $(\ell_2,i)$  pushed  onto the stacks of $j_1$ and $j_2$, respectively. 
Note that $i+\ell_1 \in \{j_1\}\cup \S_{j_1}$ and $i+\ell_2 \in \{j_2\}\cup \S_{j_2}$.
Thus, our choice of $j_1\ne j_2$ as base positions and $i$ as the largest element of $\mathcal{S}_{j_1}\cap \mathcal{S}_{j_2}$
guarantees $\ell_1 \ne \ell_2$.
We assume that $\ell_1 < \ell_2$ without loss of generality.

Let $u$ be the longest unbordered factor at $j_1$. 
Note that due to $i\in \S_{j_1}$, the suffix $w[i\dd n]$ can be decomposed into unbordered prefixes of $u$.
In particular, $w[i\dd i+\ell_2-1]$ admits such a decomposition $w[i\dd i+\ell_2-1] = v_1\cdots v_r$ with $|v_1|=\ell_1$.
Moreover, observe that $|v_r|>\ell_1$; otherwise, $v_r$ would be a border of $w[i\dd i+\ell_2-1]$.

Let $v_s$ be the first of these factors satisfying $|v_s|>\ell_1$ and let $k=j_1+|v_1\cdots v_{s-1}|$.
Note that $w[j_1\dd k-1]$ admits a decomposition $w[j_1\dd k-1]=v_1\cdots v_{s-1}$ into unbordered prefixes of $v_s$.
Consequently,  $j_1\in \S_{k}$ if $k$ is a base position, and $j_1\in \S_{k'}$ if $k$ is not a base position and $k\in \S_{k'}$ for some base position $k'$.
In either case, this contradicts the assumption that $j_1$ is a base position.

In fact, Algorithm~\ref{algo:1} calls the \textsc{FindHook} function on a subset of positions; i.e., potential references.
However, as we show below, all base references are actually base positions. 
For a proof by contradiction, suppose that $j'$ is a base reference, but it would have been pushed onto the stack of a base position $j>j'$.

Below, we show that the longest unbordered factor at $j$, denoted $u$, does not have any other occurrence in $w$.
First, suppose that it occurs at a position $k>j$.
Observe that $w[j\dd k-1]$ can be decomposed into unbordered prefixes of $u$.
Consequently,  $j \in \S_{k}$ if $k$ is a base position, and $j\in \S_{k'}$ if $k$ is not a base position and $k\in \S_{k'}$ for some base position $k'$.
In either case, this contradicts the assumption that $j$ is a base position.
Next, suppose that $u$ occurs at a position $k<j$ and let us choose the largest such $k$.
Observe that $\textsf{LSF}_\ell[k] \ge |u|$ and $\textsf{LSF}_r[k]=j$ since $j$ is the only position of $u$ larger than $k$.
However, this means that $j$ is a potential reference, contrary to our assumption.

In particular, we conclude that $u'= w[j' \dd j+|u|-1]$ does not have any border of length $|u|$ or more.
On the other hand, shorter borders are excluded since $u$ is unbordered and $u'$ can be decomposed into unbordered prefixes of $u$.
Consequently, $u'$ is unbordered. However, $j'$ is a potential reference, so $u'$ occurs to the left of $j'$.
This yields an occurrence of $u$ to the left of $j$, a contradiction.
\end{proof}
\subsection{Example of Longest Successor Factor Arrays}\label{A:Example}
\begin{example} 
Let   $w =\texttt{aabbabaabbaababbabab}$. The associated arrays are as follows.
\begin{table}[h]
\setlength\tabcolsep{3pt}
	\begin{center}
     \scalebox{0.9
     }{
        \begin{tabular}{|c|c|c|c|c|c|c|c|c|c|c|c|c|c|c|c|c|c|c|c|c|} \hline
~~$i~~$& 1 &  2&  3&  4&  5 &  6&  7&   8&  9&   10& 11& 12& 13 &14& 15& 16& 17& 18& 19& 20\\ \hline        

$w[i]$ &      \texttt{a~}&   \texttt{a~}&  \texttt{b~}&  \texttt{b~}&  \texttt{a~}&    \texttt{b~}&  \texttt{a~}&   \texttt{a~}&  \texttt{b~}&   \texttt{b~}&   \texttt{a~}&   \texttt{a~}&    \texttt{b~}&  \texttt{a~}&    \texttt{b~}&  \texttt{b~}&   \texttt{a~}&   \texttt{b~}&   \texttt{a~}&  \texttt{b~}\\ \hline \hline 

$\textsf{LSF}_{\ell}[i]$ &  5& 6&  5& 4 &3&4& 3&  4& 3 & 2 &1 & 4&  3  &  2&   1&   3&     2&  1& 0& 0\\ \hline
$\textsf{LSF}_{r}[i]$ &  7& 14&  15&  16& 17&  10& 11 & 14 &15 & 18&  19&  17&    18&  19  &  20&   18&   19&     20&  nil&  nil\\
\hline

\end{tabular}}
\end{center}
\end{table} \label{ex2}
\vspace{-0.8cm}
\end{example}

\end{document}